\newtheorem{theorem}{Theorem}
\newtheorem{definition}{Definition}
\newtheorem{lemma}{Lemma}
\newtheorem{remark}{Remark}
\newtheorem{example}{Example}
\begin{document}
%
\title{What are the Differences between Bayesian Classifiers and Mutual-Information Classifiers?}
%
%
%
%

\author{Bao-Gang Hu,~\IEEEmembership{Senior Member,~IEEE}
\IEEEcompsocitemizethanks{\IEEEcompsocthanksitem Bao-Gang Hu is with NLPR/LIAMA,
Institute of Automation, Chinese Academy of Sciences, Beijing 100190, China.\protect\\
E-mail: hubg@nlpr.ia.ac.cn}
\thanks{}}

\IEEEcompsoctitleabstractindextext{%
\begin{abstract}
In this study, both Bayesian classifiers and mutual-information classifiers are
examined for binary classifications
with or without a reject option. The general decision rules  
in terms of distinctions on error types and reject types are derived for Bayesian 
classifiers. A formal analysis is conducted to reveal the parameter 
redundancy of cost terms when abstaining classifications are enforced.
The redundancy implies an intrinsic 
problem of ``{\it non-consistency}'' for interpreting cost terms.
If no data is given to the cost terms,
we demonstrate the weakness of Bayesian classifiers in class-imbalanced classifications.
On the contrary, mutual-information classifiers are able to provide an objective solution from 
the given data, which shows a reasonable balance among error types and reject types. 
Numerical examples of using two types of classifiers are given 
for confirming the theoretical differences, including the extremely-class-imbalanced cases. 
Finally, we briefly summarize the Bayesian classifiers 
and mutual-information classifiers in terms of their application 
advantages, respectively.

\end{abstract}

\begin{keywords}
Bayes, entropy, mutual information, error types, reject types, 
abstaining classifier, cost sensitive learning.
\end{keywords}}

\maketitle

\IEEEdisplaynotcompsoctitleabstractindextext

%
\IEEEpeerreviewmaketitle

\section{Introduction}
\label{sec:introduction}
The Bayesian principle provides a powerful and formal means of 
dealing with statistical inference in data processing, such as classifications \cite{kulkarni1998}. 
If classifiers are designed based on this principle, they are 
called ``{\it Bayesian classifiers}'' in this work. The learning targets 
for Bayesian classifiers are either the minimum error or the lowest cost. 
It was recognized that Chow \cite{chow1957}\cite{chow1970} was ``{\it among 
the earliest to use Bayesian decision theory for pattern recognition}'' 
\cite{duda2001}. His pioneering work is so enlightening that its idea of optimal 
tradeoff between error and reject still sheds a bright light for us to 
deep our understanding to the subject, as well as to explore its applications 
widely in this information-explosion era. In recent years, cost sensitive learning 
and class-imbalanced learning have received much attentions in various applications [12-18]. 
For classifications of imbalanced, or skewed, datasets, ``{\it the ratio of the small 
to the large classes can be drastic such as 1 to 100, 1 to 1,000, or 1 to 10,000 
(and sometimes even more)}'' \cite{chawla2004}. It was pointed out by Yang and Wu 
\cite{yang2006} that dealing with imbalanced and cost-sensitive data is among the 
ten most challenging problems in the study of data mining. In fact, the related subjects 
are not a new challenge but a more crucial concern than before for increasing needs of 
searching useful information from massive data. Binary classifications will be a 
basic problem in such application background. Classifications based on cost terms 
for the tradeoff of error types is a conventional subject in medical diagnosis. 
Misclassification from ``{\it type I error}'' (or ``{\it false positive}'') or 
from ``{\it type II error}'' (or ``{\it false negative}'') is significantly different 
in the context of medical practices. In other domains of applications, one also needs 
to discern error types for attaining reasonable results in classifications. Among all 
these investigations, cost terms, which is usually specified by users from a cost 
matrix, play a key role in class-imbalanced learning [11-14]\cite{breiman1984}\cite{santos-rodr2009}\cite{fu2010}.

In binary classifications with a reject option, Bayesian classifiers require a cost matrix 
with six cost terms as the given data. Different from the prior to the 
probabilities of classes, this requirement can be another source of subjectivity that disqualifies
Bayesian classifiers as an objective approach of induction \cite{berger2006}. 
If an objectivity aspect is enforced for classifications with a reject option, 
a difficulty does exist for Bayesian classifiers that assign cost terms objectively. 
The cost terms for error types may be given from an application background, but are generally 
unknown for reject types. In binary classifications, Chow \cite{chow1970} and early 
researchers \cite{ha1997}\cite{golfarelli1997}\cite{baram1998} usually assumed no 
distinctions among errors and among rejects. The later study in \cite{santos-perira2005} 
considered different costs for correct classification and miscalssifications, but not for rejects. 
The more general settings for distinguishing error types and reject types were reported 
in \cite{ferri2004}\cite{friedel2006}\cite{pietraszek2007}. To overcome the problems 
of presetting cost terms manually, Pietraszek \cite{pietraszek2007} proposed two learning 
models, namely, ``{\it bounded-abstention}'' and ``{\it bounded-improvement}'', 
and Grall-Ma\"{e}s and Beauseroy \cite{grall-maes2009} applied a strategy of adding 
performance constraints for class-selective rejection. If constraints either 
on total reject or on total error, they may result in no distinctions between their 
associated cost terms. Up to now, it seems that no study has  been reported for the objective 
design of Bayesian classifiers by distinguishing error types and reject types at the same time.

Several investigations are reported by following Chow's rule on classifier designs 
with a reject option [21-30]. In addition to a kind of ``{\it ambiguity reject}'' studied by 
Chow, the other kind of ``{\it distance reject}'' was also considered in \cite{dubuisson1993}. 
Ambiguity reject is made to a pattern located in an ambiguous region between/among classes. 
Distance reject represents a pattern far away from the means of any class and is 
conventionally called an ``{\it outlier}'' in statistics \cite{duda2001}. Ha \cite{ha1997} 
proposed another important kind of reject, called ``{\it class-selective reject}'', 
which defines a subset of classes. This scheme is more suitable to multiple-class 
classifications. For example, in three-class problems, Ha's classifiers will output 
the predictions including ``{\it ambiguity reject between Class 1 and 2}'', 
``{\it ambiguity reject among Class 1, 2 and 3}'', and the other rejects from class 
combinations. Multiple rejects with such distinctions will be more informative than 
a single ``{\it ambiguity reject}''. Among all these investigations, the Bayesian 
principle is applied again for their design guideline of classifiers.

While the Bayesian inference principle is widely applied in classifications, another 
principle based on the mutual information concept is rarely adopted for designing classifiers. 
Mutual information is one of the important definitions in entropy theory \cite{cover2006}. 
Entropy is considered as a measure of uncertainty within random variables, and mutual 
information describes the relative entropy between two random variables \cite{mackay2003}. 
If classifiers seek to maximize the relative entropy 
for their learning target, we refer them to ``{\it mutual-information classifiers}''. 
It seems that Quinlan \cite{quinlan1986} was among the earliest to apply the concept of 
mutual information (but called ``{\it information gain}'' in his famous ID3 algorithm) 
in constructing the decision tree. Kv\aa{}lseth \cite{kvalseth1987} and Wickens \cite{wichens1989} 
introduced the definition of normalized mutual information (NMI) for assessing a contingency 
table, which laid down the foundation on the relationship between a confusion matrix and mutual 
information. Being pioneers in using an information-based criterion for classifier evaluations, 
Kononenko and Bratko \cite{kononenko1991} suggested the term  ``{\it information score}'' 
which was equivalent to the definition of mutual information. A research team leaded by 
Principe \cite{principe2000} proposed a general framework, called ``{\it Information Theoretic 
Learning} ({ITL})'', for designing various learning machines, in which they suggested that mutual 
information, or other information theoretic criteria, can be set as an objective function 
in classifier learning. Mackay [\cite{mackay2003}, 
page 533] once showed numerical examples for several given confusion matrices, and he suggested 
to apply mutual information for ranking the classifier examples. Wang and Hu \cite{wang2008} 
derived the nonlinear relations between mutual information and the conventional performance 
measures, such as accuracy, precision, recall and F1 measure for binary classifications. 
In \cite{hu2008}, a general formula for normalized mutual information was established with 
respect to the confusion matrix for multiple-class classifications with/without a reject option, 
and the advantages and limitations of mutual-information classifiers were discussed. However, 
no systematic investigation is reported for a theoretical comparison between Bayesian classifiers 
and mutual-information classifiers in the literature.

This work focuses on exploring the theoretical differences between Bayesian classifiers and 
mutual-information classifiers in classifications for the settings with/without a reject option. 
In particular, this paper derives much from and consequently extends to Chow's work by 
distinguishing error types and reject types. To achieve analytical tractability without losing the 
generality, a strategy of adopting the simplest yet most meaningful assumptions to classification 
problems is pursued for investigations. The following assumptions are given in the same way as 
those in the closed-form studies of Bayesian classifiers by Chow \cite{chow1970} and Duda, et al 
\cite{duda2001}:
\begin{enumerate}
 \item[A1.] Classifications are made for two categories (or classes) over the feature variables.
 \item[A2.] All probability distributions of feature variables are exactly known. 
\end{enumerate}

One may argue that the assumptions above are extremely restricted to offer practical generality 
in solving real-world problems. In fact, the power of Bayesian classifiers does not stay 
within their exact solutions to the theoretical problems, but appear from their 
generic inference principle in guiding 
real applications, even in the extreme approximations to the theory. We fully recognize that 
the assumption of complete knowledge on the relevant probability distributions may be never the 
cases in real-world problems \cite{santos-perira2005}\cite{fukunaga1990}. The closed-form solutions 
of Bayesian classifiers on binary classifications in 
\cite{chow1970}\cite{duda2001} have demonstrated the useful design guidelines that are applicable 
to multiple classes \cite{ha1997}. The author believes that the analysis based 
on the assumptions above will provide sufficient information for revealing the theoretical 
differences between Bayesian classifiers and mutual-information classifiers, 
while the intended simplifications will benefit 
readers to reach a better, or deeper, understanding to the advantages and limitations of each type 
of classifiers. 

The contributions of this work are twofold. First, the analytical formulas for Bayesian classifiers 
and mutual-information classifiers are derived to include the general cases 
with distinctions among error 
types and reject types for cost sensitive learning in classifications. Second, comparisons are 
conducted between the two types of classifiers for revealing their similarities and differences. 
Specific efforts are made on a formal analysis of parameter redundancy to the cost terms for Bayesian 
classifiers when a reject option is applied. 
Section II presents a general decision rule of Bayesian classifiers with 
or without a reject option. Sections III provides the basic formulas for mutual-information classifiers. 
Section IV investigates the similarities and differences between two types of classifiers, and numerical 
examples are given to highlight the distinct features in their applications. The question presented in the title 
of the paper is concluded by a simple answer in Section V.

\section{Bayesian Classifiers with A Reject Option}
\label{sec:Bayesian_Classi_Reject}

\subsection{General Decision Rule for Bayesian Classifiers}
\label{subsec:GDR_BayesianClassi}
Let $\textbf{x}$ be a 
random pattern satisfying $\textbf{x} \in \textbf{X} \subset {R^d}$, which is 
in a $d$-dimensional feature space and will be classified. 
The true (or target) state $t$ of $\textbf{x}$ is within the finite set of 
two classes, $t \in {T} = \{t_{1}, t_{2}\}$, and the possible decision output $y=f(\textbf{x})$ is 
within three classes, $y \in {Y} = \{y_{1}, y_{2}, y_{3}\}$, where $f$ is a function for classifications
and $y_{3}$ represents 
a ``{\it reject}'' class. Let $p(t_{i})$ be the prior probability of class $t_{i}$ 
and $p(\textbf{x}|t_{i})$ be the conditional probability density function of $\textbf{x}$ given that it 
belongs to class $t_{i}$. The {\it posterior} probability $p(t_{i}|\textbf{x})$ is calculated 
through the Bayes formula \cite{duda2001}: 
\begin{equation}
 \label{equ:1}
p(t_{i}|\textbf{x}) = \dfrac{p(\textbf{x}|t_{i})p(t_{i})}{p(\textbf{x})},
\end{equation}
where $p(\textbf{x})$ represents the mixture density for normalizing the probability. Based on 
the posterior probability, the Bayesian rule assigns a pattern $\textbf{x}$ into the 
class that has the highest posterior probability. Chow \cite{chow1957}\cite{chow1970} 
first introduced the framework of the Bayesian decision theory into the study of pattern 
recognition and derived the best error-type trade-off formulas and the related optimal 
reject rule. The purpose of the reject rule is to minimize the total risk (or cost) in 
classifications. Suppose $\lambda_{ij}$ is a cost term for the true class of a pattern 
to be $t_i$, but decided as $y_j$. Then, the conditional risk for classifying a particular 
$\textbf{x}$ into $y_j$ is defined as:
\begin{equation}
 \label{equ:2}
 \begin{array}{r@{\quad}l}
& Risk(y_j|\textbf{x})=\sum \limits_{i=1}^{2}\lambda_{ij}p(t_i|\textbf{x}) =\sum 
\limits_{i=1}^{2}\lambda_{ij} \dfrac{p(\textbf{x}|t_i)p(t_i)}{p(\textbf{x})},\\
& j =1,2,3.
\end{array}
\end{equation}
Note that the definition of $\lambda_{ij}$ in this work is a bit different 
with that in \cite{duda2001}, so that $\lambda_{ij}$ will form a $2\times3$ matrix. 
Chow \cite{chow1970} assumed the initial constraints on $\lambda_{ij}$ from the intuition
in classifications: 
\begin{equation}
 \label{equ:3}
\lambda_{ik} > \lambda_{i3} > \lambda_{ii} \geq 0, \quad i \neq k, \quad i = 1,2, \quad k = 1,2.
\end{equation}
The constraints imply that a misclassification will suffer a higher cost than a rejection, and a rejection
will cost more than a correct classification. Relations about $\lambda_{ij}$ are 
the main concern in the study of cost-sensitive learning, and this issue will be 
addressed later in this work. The total risk for the decision output $y$ will be \cite{duda2001}:
\begin{equation}
 \label{equ:4}
Risk(y)=\int \limits_{V} 
\sum \limits_{j=1}^{3}\sum \limits_{i=1}^{2}\lambda_{ij}p(t_i|\textbf{x})p(\textbf{x})d\textbf{x},
\end{equation}
with integration over the entire observation space $V$.

\begin{definition}[Bayesian classifier]
If a classifier is determined from the minimization of its risk over all patterns:
\begin{equation}\tag{5a}
 \label{equ:5a}
y^*=arg \min \limits_{y}Risk(y),
\end{equation}
or in anther form on a given pattern $\textbf{x}$:
\begin{equation}\tag{5b}
 \label{equ:5b}
Decide \quad y_j \quad if \quad Risk(y_j|\textbf{x})=\min \limits_{i}Risk(y_i|\textbf{x})
\end{equation}
this classifier is called ``{\it Bayesian classifier}'', or ``{\it Chow's abstaining 
classifier}'' \cite{friedel2006}. The term of $Risk(y^*)$ is usually called 
``{\it Bayesian risk}'', or ``{\it Bayesian error}'' in the cases that zero-one 
cost terms ($\lambda_{11}=\lambda_{22}=0, \lambda_{12}=\lambda_{21}=1$) are used 
for no rejection classifications\cite{duda2001}.
\end{definition}

In \cite{chow1970}, a single threshold for a reject option was investigated. 
This setting was obtained for the assumption that cost terms are applied without 
distinction among the errors and among rejects. Following Chow's approach but with 
extension to the general cases to cost terms, one is able to derive the general decision 
rule on the rejection for Bayesian classifiers. 

\begin{theorem}
 The general decision rule for Bayesian classifiers are:
 \begin{equation}\tag{6a}
\label{equ:6a}
 \begin{array}{r@{\quad}l}
& Decide ~ y_1 ~ if ~ \dfrac{p(\textbf{x}|t_1)p(t_1)}{p(\textbf{x}|t_2)p(t_2)}>\delta_1,\\
& No~ rejection: ~ \delta_1=\dfrac{\lambda_{21} - \lambda_{22}}{\lambda_{12} - \lambda_{11}},\\
& Rejection: ~ \delta_1=\dfrac{\lambda_{21} - \lambda_{23}}{\lambda_{13} - \lambda_{11}},
\end{array} 
\end{equation}

 \begin{equation}\tag{6b}
\label{equ:6b}
 \begin{array}{r@{\quad}l}
& Decide ~ y_2 ~ if \quad \dfrac{p(\textbf{x}|t_1)p(t_1)}{p(\textbf{x}|t_2)p(t_2)} \leq \delta_2,\\
& No ~ rejection: ~ \delta_2=\dfrac{\lambda_{21} - \lambda_{22}}{\lambda_{12} - \lambda_{11}},\\
& Rejection: ~ \delta_2=\dfrac{\lambda_{23} - \lambda_{22}}{\lambda_{12} - \lambda_{13}},
\end{array} 
\end{equation}

 \begin{equation}\tag{6c}
\label{equ:6c}
 \begin{array}{r@{\quad}l}
& Decide ~ y_3 ~ if ~ \dfrac{T_{r2}}{1 - T_{r2}}=\dfrac{\lambda_{23} 
- \lambda_{22}}{\lambda_{12} - \lambda_{13}}\\
& < \dfrac{p(\textbf{x}|t_1)p(t_1)}{p(\textbf{x}|t_2)p(t_2)} \leq \dfrac{\lambda_{21} - 
\lambda_{23}}{\lambda_{13} - \lambda_{11}}=\dfrac{1 - T_{r1}}{T_{r1}},
\end{array} 
\end{equation}

 \begin{equation}\tag{6d}
\label{equ:6d}
 \begin{array}{r@{\quad}l}
& Subject ~ to ~ 0<\dfrac{\lambda_{23} - \lambda_{22}}{\lambda_{12} - 
\lambda_{13}} < \dfrac{\lambda_{21} - \lambda_{22}}{\lambda_{12} - \lambda_{11}}\\
& < \dfrac{\lambda_{21} - \lambda_{23}}{\lambda_{13} - \lambda_{11}}, \quad and
\end{array}
\end{equation}

 \begin{equation}\tag{6e}
\label{equ:6e}
 \begin{array}{r@{\quad}l}
& No ~ rejection: ~ T_{r1} = T_{r2} = 0.5,  \\
& Rejection: ~ 0<T_{r1} + T_{r2}\leq 1.
\end{array}
\end{equation}
Eq (\ref{equ:6c}) applies the definition of two thresholds 
(called ``{\it rejection thresholds}'' in \cite{chow1970}), $T_{r1}$ and $T_{r2}$. 
\end{theorem}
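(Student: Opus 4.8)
The plan is to work directly from the minimum-risk rule of Eq.~(\ref{equ:5b}) and to reduce the three-way comparison of $Risk(y_1|\textbf{x})$, $Risk(y_2|\textbf{x})$ and $Risk(y_3|\textbf{x})$ to a set of thresholds on the single likelihood ratio
\[
L(\textbf{x})=\frac{p(\textbf{x}|t_1)p(t_1)}{p(\textbf{x}|t_2)p(t_2)}.
\]
Since $y_j$ is chosen exactly when $Risk(y_j|\textbf{x})$ is the smallest of the three conditional risks, it suffices to treat the three pairwise comparisons and then intersect the half-lines on which each output becomes the minimiser.

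For each pair I would substitute Eq.~(\ref{equ:2}) into the difference $Risk(y_a|\textbf{x})-Risk(y_b|\textbf{x})$, which is linear in $p(t_1|\textbf{x})$ and $p(t_2|\textbf{x})$; dividing by $p(t_2|\textbf{x})\,p(\textbf{x})$ and invoking Eq.~(\ref{equ:1}) rewrites it as an inequality in $L(\textbf{x})$. The direction of each inequality is fixed by the cost ordering of Eq.~(\ref{equ:3}), which gives $\lambda_{12}-\lambda_{11}>0$, $\lambda_{13}-\lambda_{11}>0$ and $\lambda_{12}-\lambda_{13}>0$. In this way $Risk(y_1|\textbf{x})\le Risk(y_2|\textbf{x})$ yields $L\ge(\lambda_{21}-\lambda_{22})/(\lambda_{12}-\lambda_{11})$, the pair $Risk(y_1|\textbf{x})\le Risk(y_3|\textbf{x})$ yields $L\ge(\lambda_{21}-\lambda_{23})/(\lambda_{13}-\lambda_{11})$, and $Risk(y_2|\textbf{x})\le Risk(y_3|\textbf{x})$ yields $L\le(\lambda_{23}-\lambda_{22})/(\lambda_{12}-\lambda_{13})$. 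Intersecting the relevant half-lines then produces (\ref{equ:6a}), (\ref{equ:6b}) and the two-sided band (\ref{equ:6c}), while discarding the $y_3$ column collapses the rule to the single threshold $(\lambda_{21}-\lambda_{22})/(\lambda_{12}-\lambda_{11})$ and recovers the no-rejection case.

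The step I expect to be the crux is the ordering (\ref{equ:6d}), which certifies that the reject band of (\ref{equ:6c}) is a genuine middle interval and not empty. Setting $p=\lambda_{12}-\lambda_{13}$, $q=\lambda_{13}-\lambda_{11}$, $r=\lambda_{21}-\lambda_{23}$, $s=\lambda_{23}-\lambda_{22}$, all positive by Eq.~(\ref{equ:3}), the three thresholds become $s/p$, $(r+s)/(p+q)$ and $r/q$, so the no-rejection threshold is exactly the \emph{mediant} of the two rejection thresholds. The key observation is that both nontrivial inequalities of (\ref{equ:6d}) cross-multiply to the single condition $qs<pr$, because a mediant always lies strictly between two unequal fractions once they are ordered; hence the entire chain (\ref{equ:6d}) is equivalent to $qs<pr$. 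Translating this through the defining relations of (\ref{equ:6c}), $r/q=(1-T_{r1})/T_{r1}$ and $s/p=T_{r2}/(1-T_{r2})$, the same condition reads $T_{r2}/(1-T_{r2})<(1-T_{r1})/T_{r1}$, which clears to $T_{r1}+T_{r2}<1$, exactly the rejection line of (\ref{equ:6e}); the boundary $T_{r1}+T_{r2}=1$ forces the three thresholds to coincide and suppresses rejection, while symmetric zero--one costs give $T_{r1}=T_{r2}=0.5$. Beyond the sign bookkeeping of the cost differences and this mediant argument, every remaining step is routine linear rearrangement.
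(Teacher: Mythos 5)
Your proposal is correct, and its first stage coincides with the paper's Appendix A: the paper likewise reduces the minimum-risk rule (5b) to pairwise comparisons, obtaining exactly your three inequalities as its (A1), (A2) and (6c), with the signs of $\lambda_{12}-\lambda_{11}$, $\lambda_{13}-\lambda_{11}$, $\lambda_{12}-\lambda_{13}$ fixed by the constraints (3). Where you genuinely diverge is at the crux, the justification of (6d) and the passage to (6e). The paper argues geometrically: it observes that (3) alone cannot order the three thresholds, and then imposes (6d) via two ``hints'' --- that (6c) must describe a single lower and single upper boundary for the reject band, and that the one-sided bounds of (A1) and (A2) must coincide with the band's endpoints so that $R_1$, $R_2$, $R_3$ cover the whole domain, any violation spawning a spurious fourth region $R_4$ --- after which (6e) is merely asserted to follow ``directly'' from (6c) and (6d). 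You instead give an explicit algebraic equivalence: writing $p=\lambda_{12}-\lambda_{13}$, $q=\lambda_{13}-\lambda_{11}$, $r=\lambda_{21}-\lambda_{23}$, $s=\lambda_{23}-\lambda_{22}$ (all positive under (3)), the no-rejection threshold $(r+s)/(p+q)$ is the mediant of $s/p$ and $r/q$, so both nontrivial inequalities of (6d) cross-multiply to the single condition $qs<pr$, which via $r/q=(1-T_{r1})/T_{r1}$ and $s/p=T_{r2}/(1-T_{r2})$ clears to $T_{r1}+T_{r2}<1$ --- I checked the algebra and it is sound, including the degenerate case $T_{r1}+T_{r2}=1$ collapsing the half-open band $(\delta_2,\delta_1]$ to the empty set. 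Your route buys more than the paper's: it proves (6d) is \emph{equivalent} to the threshold constraint in (6e) rather than motivating it by consistency of regions, and it silently delivers the paper's hint B as a theorem, since the mediant lying strictly inside the band guarantees that each one-sided comparison is dominated on the relevant half-line and hence that the intersections in (6a)--(6c) genuinely partition the range of the likelihood ratio --- a point worth stating explicitly if you write this up. Conversely, the paper's covering argument supplies the interpretive content (why a violation of (6d) is pathological, namely the appearance of $R_4$) that your computation leaves implicit; and note that the no-rejection line $T_{r1}=T_{r2}=0.5$ of (6e) is a zero-one normalization convention in both treatments, which your closing remark handles at the same (acceptable) level of informality as the paper does.
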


\begin{proof}
 See Appendix A.
\end{proof}

\begin{figure*}[!t]
\centering
\includegraphics[width=8.4in,height=4.4in, bb= -70 -0 700 410]{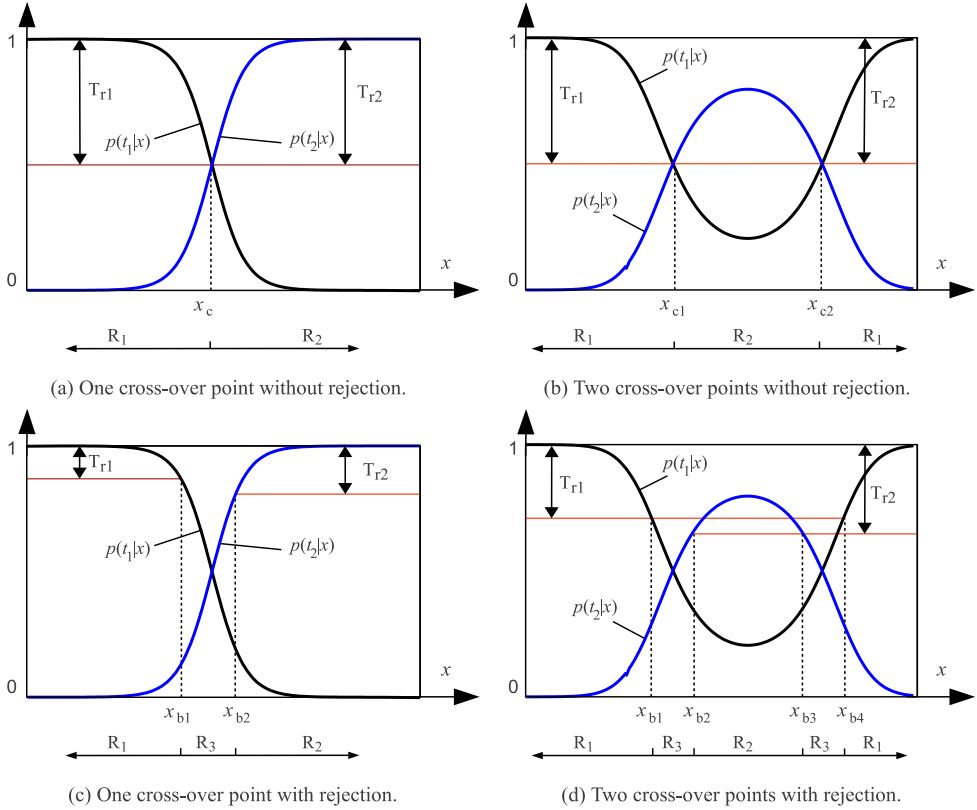}
\centering
\caption{Rejection scenarios from the plots of $p(t_i|x)$ for univariate Gaussian distributions.}
\label{fig:1}
\end{figure*}

Note that eq. (\ref{equ:6d}) suggests general constraints over $\lambda_{ij}$. 
The necessity for having such constraints is explained in Appendix A. 
A graphical interpretation to the two thresholds is illustrated in Fig. 1. Based on eq. (\ref{equ:6c}), 
the thresholds can be calculated from the following formulas: 
\setcounter{equation}{6}
\begin{equation}
 \label{equ:7}
\begin{array}{r@{\quad}l}
& T_{r1} = \dfrac{\lambda_{13} - \lambda_{11}}{\lambda_{13} - \lambda_{11} + \lambda_{21} - \lambda_{23}}, ~ and \\
  & T_{r2} = \dfrac{\lambda_{23} - \lambda_{22}}{\lambda_{12} - \lambda_{13} + \lambda_{23} - \lambda_{22}}.
\end{array}
\end{equation}
Eq. (\ref{equ:7}) describes general relations between thresholds and cost terms on 
binary classifications, which enables the classifiers to make the distinctions among 
errors and among rejects. Note that the special settings of Chow's rules \cite{chow1970} 
can be derived from eq. (\ref{equ:7}):
\begin{equation}
 \label{equ:8}
\lambda_{11} = \lambda_{22} = 0,~ \lambda_{12} = \lambda_{21} = 1, ~ \lambda_{13} = \lambda_{23} =  T_{r}.
\end{equation}
Another important relation in \cite{pietraszek2007} can also be obtained: 
\begin{equation}
 \label{equ:8}
\begin{array}{l@{\quad}}
\lambda_{11}=\lambda_{22}=0, \\
0 < \lambda_{r} = \lambda_{13} = \lambda_{23} < 
\dfrac{\lambda_{12}\lambda_{21}}{\lambda_{12}+\lambda_{21}} , \\
T_{r1}=\dfrac{\lambda_{r}}{\lambda_{21}} ~~~~and~~~~ T_{r2}=\dfrac{\lambda_{r}}{\lambda_{12}} .
\end{array}
\end{equation}
Pietraszek \cite{pietraszek2007} derived the rational region of $\lambda_{r}$ above through ROC curves. 
The error costs can be different but not for reject ones. Note that, 
however, the rejection thresholds will be different when $\lambda_{12} \neq \lambda_{21}$.    
For advanced applications, Vanderlooy, et al \cite{vanderlooy2009} generalized Chow's rules by distinguishing error 
types and reject types, and derived the relations between two 
''\emph{likelihood ratio thresholds}`` and cost terms. 
Their rules of missing the terms $\lambda_{11}$ and $\lambda_{22}$ are 
not theoretically general, yet sufficient for applications. They derived formulas only from the 
inequality constraints of $Risk(y_1|\textbf{x}) > Risk(y_3|\textbf{x})$ and $Risk(y_2|\textbf{x}) 
> Risk(y_3|\textbf{x})$, respectively. 
Up to now, it seems no one has reported the general constraints (\ref{equ:6d}) in the literature. 
Based on eq. (\ref{equ:6d}), one can derive the rational (\ref{equ:3}), rather than employing the intuition. 

By applying eq. (\ref{equ:1}) and the constraint $p(t_1|\textbf{x})+p(t_2|\textbf{x}) =1$, one can achieve 
the decision rules from eq. (6) with respect to the posterior probabilities and 
thresholds in a simple and better form for abstaining classifiers:
 \begin{equation}
\label{equ:9}
\begin{array}{r@{\quad}l}
Decide ~ y_1 ~ if ~ p(t_1|\textbf{x})>1 - T_{r1},~~~~~~\\
Decide ~ y_2 ~ if ~ p(t_2|\textbf{x}) \geq 1 - T_{r2},~~~~~~\\
Decide ~ y_3 ~ for ~ otherwise, ~~~~~~~~~~~~~\\
 Subject ~ to ~ 0 < T_{r1} + T_{r2} \leq 1. ~~~~~~
\end{array}
\end{equation}
In comparison with the decision rules of eq. (6), which are expressed in terms of
the likelihood ratio, eq. (\ref{equ:9}) together with Fig. 1 presents a better view for users to understand 
abstaining Bayesian classifiers. A plot of posterior probabilities show advantages over a
plot of the likelihood ratio (Figure 2.3 in \cite{duda2001}) for determining rejection
thresholds. Note that in Fig. 1 the plots are depicted on a one-dimensional variable
for Gaussian distributions of $X$. The simplification supports the suggestions by Duda, et
al, that one ``{\it should not obscure the central points illustrated in our simple 
example}'' \cite{duda2001}. Two sets of geometric points are shown for the 
plots. One set is called ``{\it cross-over points}'', denoted by $x_{ci}$, which are 
formed from two curves of $p(t_1|x)$ and $p(t_2|x)$. And the other is termed 
``{\it boundary points}'', denoted by $x_{bj}$. The boundary points partition 
classification regions for one-dimensional problems. For a ``{\it no rejection}'' 
case, the boundary points are controlled by the ratio of $(\lambda_{21} - 
\lambda_{22})/(\lambda_{12} - \lambda_{11})$. In abstaining classifications, 
those points are determined from two thresholds, respectively. For multiple dimension
problems, one can understand that both types of the points above become to be curves
or even hypersurfaces.  

With the exact knowledge of $p(t_i)$, $p(\textbf{x}|t_i)$, and $\lambda_{ij}$, one can 
calculate Bayesian risk from the following equation:
 \begin{equation}
\label{equ:10}
 \begin{array}{r@{}l}
& Risk(y^*) = \lambda_{11}CR_1 + \lambda_{12}E_1 + \lambda_{13}Rej_1 + \lambda_{22}CR_2 \\
& + \lambda_{21}E_2 + \lambda_{23}Rej_2\\
& = \lambda_{11}\int \limits_{R_1}p(t_1)p(\textbf{x}|t_1)d\textbf{x} + \lambda_{12}\int \limits_{R_2}p(t_1)p(\textbf{x}|t_1)d\textbf{x}\\
& + \lambda_{13}\int \limits_{R_3}p(t_1)p(\textbf{x}|t_1)d\textbf{x} +\lambda_{21}\int \limits_{R_1}p(t_2)p(\textbf{x}|t_2)d\textbf{x}\\
& + \lambda_{22}\int \limits_{R_2}p(t_2)p(\textbf{x}|t_2)d\textbf{x} + \lambda_{23}\int \limits_{R_3}p(t_2)p(\textbf{x}|t_2)d\textbf{x},
\end{array}
\end{equation}
where $CR_i$, $E_i$ and $Rej_i$ are the probabilities of ``{\it Correct Recognition}'', 
``{\it Error}'', and ``{\it Rejection}'' for the $i$th class in the classifications, 
respectively; and $R_1$ to $R_3$ are the classification regions of Class 1, Class 2 and 
the reject class, respectively. The general relations among $CR_i$, $E_i$ and $Rej_i$ for 
binary classifications are given by \cite{chow1970}:
 \begin{equation}
\label{equ:11}
 \begin{array}{r@{}l}
& CR_1 + CR_2 + E_1 + E_2 + Rej_1 + Rej_2 \quad \quad \quad \\
& \qquad \qquad \qquad \qquad \qquad= CR + E + Rej = 1,\\
& and \quad A = \dfrac{CR}{CR + E},
\end{array}
\end{equation}
where $CR$, $E$, and $Rej$ represent total correct recognition, total error and total 
reject rates, respectively; and $A$ is the accuracy rate of classifications.

\subsection{Parameter Redundancy Analysis of Cost Terms}
\label{subsec:2.4}
Bayesian classifiers present one of the general tools for cost sensitive learning. From 
this perspective, there exists a need for a systematic investigation into a parameter redundancy 
analysis of cost terms for Bayesian classifiers, which appears missing for a reject option. 
This section will attempt to develop a theoretical analysis of parameter redundancy for cost terms.

For Bayesian classifiers, when all cost terms are given along with the other 
relevant knowledge about classes, a unique set of solutions will be obtained. However, 
this phenomenon does not indicate that all cost terms will be independent for determining 
the final results of Bayesian classifiers. In the followings, a parameter dependency analysis 
is conducted because it  suggests a theoretical basis for a better understanding of relations 
among the cost terms and the outputs of Bayesian classifiers. Based on \cite{yang2008}\cite{hu2009}, 
we present the relevant definitions but derive a theorem from the functionals in 
eqs. (4) and (5) so that it holds generality for any distributions of features. 
Let a parameter vector be defined as $\mathbf{\theta} = \{\theta_1, 
\theta_2, \cdots, \theta_p\} \in \textbf{S}$, where $p$ is the total number of parameters in a model 
$f(\textbf{x}, \mathbf{\theta})$ and $\textbf{S}$ denotes the parameter space.

\begin{definition}[Parameter redundancy \cite{yang2008}]
A model $f(\textbf{x},\mathbf{\theta})$ is considered to be parameter redundant if it can be expressed 
in terms of a smaller sized parameter vector $\mathbf{\beta} = \{\beta_1, \beta_2, \cdots, 
\beta_q\} \in \mathbf{S}$, where $q < p$.
\end{definition}

\begin{definition}[Independent parameters]
A model $f(\textbf{x}, \mathbf{\beta})$ is said to be governed by independent parameters if it can be 
expressed in terms of the smallest size of parameter vector $\mathbf{\beta} = \{\beta_1, 
\beta_2, \cdots, \beta_m\} \in \textbf{S}$. Let $N_{IP}(\mathbf{\beta})$ denote the total 
number $(=m)$ of $\mathbf{\beta}$ for the model $f(\textbf{x}, \mathbf{\beta})$.
\end{definition}

\begin{definition}{\it (Function of parameters, parameter composition, input parameters, 
intermediate parameters)}:
Suppose three sets of parameter vectors are denoted by $\mathbf{\theta}= \{\theta_1, 
\theta_2, \cdots, \theta_p\} \in \textbf{S}_1$, $\mathbf{\gamma} = \{\gamma_1, \gamma_2, \cdots, 
\gamma_q\} \in \textbf{S}_2$, and $\mathbf{\eta}= \{\eta_1, \eta_2, \cdots, \eta_r\} \in \textbf{S}_3$.
If for a model there exists $f(\textbf{x}, 
\mathbf{\theta})=f(\textbf{x}, \varphi(\psi(\mathbf{\theta})))$ for $\varphi$: $\textbf{S}_1 \rightarrow 
\textbf{S}_2$ and $\psi$: $\textbf{S}_2 \rightarrow \textbf{S}_3$, we call $\varphi$ and $\psi$ to be functions of 
parameters, and $\varphi(\psi(\mathbf{\theta}))$ to be parameter composition, where $\theta_i$ 
are called input parameters for $f(\textbf{x}, \varphi(\psi(\mathbf{\theta})))$, $\gamma_j$ and $\eta_k$ 
are intermediate parameters.
\end{definition}

\begin{lemma}
 Suppose a model holds the relation $f(\textbf{x}, \mathbf{\theta})=f(\textbf{x}, \varphi(\psi(\mathbf{\theta})))$ 
for Definition 4. The total number of independent parameters of $\mathbf{\theta}$, denoted as 
$N_{IP}(f, \mathbf{\theta})$ for the model $f$ will be no more than $\min(p, q, r)$, or in a form of:
\begin{equation}
\label{equ:26}
 N_{IP}(f, \mathbf{\theta}) \leq \min (p, q, r)
\end{equation}
\end{lemma}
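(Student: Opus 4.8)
The plan is to unpack the composition $f(\textbf{x},\mathbf{\theta})=f(\textbf{x},\varphi(\psi(\mathbf{\theta})))$ from Definition 4 and track how the number of free parameters can only decrease as we pass through each map. First I would observe that the input parameters $\mathbf{\theta}$ live in $\textbf{S}_1$ with $p$ components, the intermediate vector $\mathbf{\gamma}=\psi(\mathbf{\theta})$ lives in $\textbf{S}_2$ with $q$ components, and $\mathbf{\eta}=\varphi(\mathbf{\gamma})$ lives in $\textbf{S}_3$ with $r$ components. The key elementary fact is that a function cannot increase the effective dimensionality of its argument: the image of a set under any map $\psi\colon \textbf{S}_1\to\textbf{S}_2$ cannot require more than $\min(\dim\textbf{S}_1,\dim\textbf{S}_2)$ independent coordinates to describe, because the image is simultaneously a subset of the $q$-dimensional codomain and the continuous (or functional) image of the $p$-dimensional domain.

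The main steps, in order, would be as follows. I would first apply Definition 2 and Definition 3: the model $f$ depends on $\mathbf{\theta}$ only through the composition $\varphi(\psi(\mathbf{\theta}))$, so by Definition 1 the model is parameter redundant whenever the composition can be written with fewer than $p$ free coordinates. Next I would argue that since $f$ factors through $\mathbf{\eta}=\varphi(\psi(\mathbf{\theta}))\in\textbf{S}_3$, the value of $f$ is determined by at most $r$ numbers, so $N_{IP}(f,\mathbf{\theta})\leq r$. Then, because $\mathbf{\eta}$ is itself a function of $\mathbf{\gamma}=\psi(\mathbf{\theta})\in\textbf{S}_2$, the composite map $f$ factors through at most $q$ numbers, giving $N_{IP}(f,\mathbf{\theta})\leq q$. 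Finally, since the whole dependence is on $\mathbf{\theta}$, which has $p$ components, trivially $N_{IP}(f,\mathbf{\theta})\leq p$. Combining the three bounds yields $N_{IP}(f,\mathbf{\theta})\leq\min(p,q,r)$, which is exactly \eqref{equ:26}.

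The delicate point — and the step I expect to be the main obstacle — is making the phrase ``depends on at most $r$ numbers implies at most $r$ independent parameters'' rigorous under the definition of $N_{IP}$ given in Definition 3. That definition speaks of the \emph{smallest} reparametrization in terms of a vector $\mathbf{\beta}$, so I must show that factoring through an $r$-component intermediate vector actually furnishes such a reparametrization with $m\leq r$, rather than merely suggesting one intuitively. Concretely, I would exhibit $\mathbf{\eta}=\varphi(\psi(\mathbf{\theta}))$ as a candidate reparametrizing vector and verify that $f(\textbf{x},\mathbf{\theta})$ is expressible as a function of $\textbf{x}$ and $\mathbf{\eta}$ alone; this guarantees the smallest such size is no larger than $r$, and the analogous constructions with $\mathbf{\gamma}$ and $\mathbf{\theta}$ give the bounds $q$ and $p$. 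Care is needed because $N_{IP}$ counts genuinely independent parameters, so the bound is via the existence of \emph{some} size-$r$ (resp. size-$q$, size-$p$) parametrization, not an equality; the minimum over the three is then immediate. For the application to cost terms in the Bayesian classifier of eqs.~\eqref{equ:4} and \eqref{equ:5a}, the intermediate maps are precisely the threshold formulas \eqref{equ:7} that collapse the six cost terms into two thresholds, so the general inequality specializes to reveal the redundancy, and I would keep the proof at the level of generality that makes this specialization transparent.
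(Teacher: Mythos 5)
Your proposal is correct and follows essentially the same route as the paper's proof: the paper likewise argues that each parameter vector in the chain furnishes an upper bound on $N_{IP}(f,\mathbf{\theta})$ (the no-composition case giving the bound $p$ via Definitions 2 and 3, and ``the same principle'' giving $q$ and $r$ through the composition), then takes the minimum. Your version is in fact more careful than the paper's, since you explicitly exhibit each intermediate vector ($\mathbf{\eta}$, $\mathbf{\gamma}$, $\mathbf{\theta}$) as a candidate reparametrization in the sense of Definition 2 — a step the paper leaves implicit — and you quietly repair the paper's domain/codomain mismatch in Definition 4 by reading the composition as $\mathbf{\gamma}=\psi(\mathbf{\theta})$, $\mathbf{\eta}=\varphi(\mathbf{\gamma})$.
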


\begin{proof}
 Suppose $f(\textbf{x}, \mathbf{\theta} = \{\theta_1, \theta_2, \cdots, \theta_p\})$ without parameter 
composition, one can prove that $N_{IP}(f, \mathbf{\theta}) \leq \min(p)$. According to 
Definition 2, any increase of its size of $\mathbf{\theta}$ over $p$ will produce a
parameter redundancy in the model. Definition 3 indicates that the vector size $p$ will be 
an upper bound for $N_{IP}(f, \mathbf{\theta})$ in this situation. In the same principle, 
after parameter compositions are defined in Definition 4 for $f(\textbf{x}, \mathbf{\theta})=f(\textbf{x}, 
\varphi(\psi(\mathbf{\theta})))$, the lowest parameter size within $\mathbf{\theta}$, $\psi$ 
and $\varphi$, will be the upper bound of $f(\textbf{x}, \mathbf{\theta})$.   
\end{proof}

For Bayesian classifiers defined by eq. (\ref{equ:5a}), one can rewrite it in a form of:
\begin{equation}
 \label{equ:27}
y^* = arg \min Rsik(y, \{\mathbf{\theta_\lambda}, \mathbf{\theta_C} \}),
\end{equation}
where $\mathbf{\theta_\lambda} = (\lambda_{11}, \lambda_{12}, \lambda_{13}, \lambda_{21}, 
\lambda_{22}, \lambda_{23})$ and $\mathbf{\theta}_C = ({p(t_1), p(t_2), p(\textbf{x}|t_1), p(\textbf{x}|t_2)}$ 
in binary classifications, with $\mathbf{\theta_\lambda} \cap \mathbf{\theta_C} = \emptyset$ 
for their disjoint sets. Let $E$ (or $Rej$) be the total Bayesian error (or reject) in binary 
classifications:
\begin{equation}
 \label{equ:28}
 \begin{array}{r@{}l}
& E(y^*, \mathbf{\theta}) = E_1 + E_2 = \qquad \qquad \qquad \qquad \qquad\\
& \qquad \qquad \int \limits_{R_2}p(t_1)p(\textbf{x}|t_1)d\textbf{x} + \int \limits_{R_1}p(t_2)p(\textbf{x}|t_2)d\textbf{x},\\
& Rej(y^*, \mathbf{\theta}) = Rej_1 + Rej_2 = \\
& \qquad \qquad \int \limits_{R_3}p(t_1)p(\textbf{x}|t_1)d\textbf{x} + \int \limits_{R_3}p(t_2)p(\textbf{x}|t_2)d\textbf{x}.
\end{array}
\end{equation}

Based on eqs. (\ref{equ:7}) and (\ref{equ:11}), the total error (or reject) of Bayesian 
classifiers defined by eq. (\ref{equ:28}) shows a form of composition of parameters:
\begin{equation}
 \label{equ:29}
 \begin{array}{r@{=}l}
E(y^*, \{\mathbf{\theta_\lambda, \theta_C}\}) & E(y^*, \{\mathbf{x_b(T_r(\theta_\lambda)), \theta_C}\}),\\
Rej(y^*, \{\mathbf{\theta_\lambda, \theta_C}\}) & Rej(y^*, \{\mathbf{x_b(T_r(\theta_\lambda)), \theta_C}\})
\end{array}
\end{equation}
where $\mathbf{x_b}$ and $\mathbf{T_r}$ are two functions of the parameters.  
$\lambda_{ij}~ (i=1,2, j=1,2,3)$ are usually input parameters, but $T_{rk}~ (k=1,2)$ 
can serve as either intermediate parameters or input ones.

\begin{theorem}
 In abstaining binary classifications, the total number of independent parameters within 
the cost terms for defining Bayesian classifiers, $y^*$, should be at most two $(N_{IP}(y^*, 
\mathbf{\theta}) \leq 2)$. Therefore, applications of cost terms of $\mathbf{\theta_\lambda}= 
(\lambda_{11},\lambda_{12},\lambda_{13}, \lambda_{21},\lambda_{22},\lambda_{23})$ in the 
traditional cost sensitive learning will exhibit a parameter redundancy for calculating Bayesian 
$E(y^*)$ and $Rej(y^*)$ even after assuming $\lambda_{11} = \lambda_{22} = 0$, 
and $\lambda_{12} =1$ as the conventional way in classifications \cite{elkan2001}\cite{friedel2006}.
\end{theorem}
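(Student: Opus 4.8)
The plan is to show that all six cost terms influence the Bayesian outputs $E(y^*)$ and $Rej(y^*)$ only through the two rejection thresholds $T_{r1}$ and $T_{r2}$, and then to invoke Lemma 1 with a two-element intermediate parameter vector. First I would establish the functional factorization of the decision regions. By Theorem 1 --- specifically the reduced form in eq.~(\ref{equ:9}) --- the three classification regions $R_1$, $R_2$, $R_3$ of an abstaining Bayesian classifier are, once $\theta_C$ is held fixed, completely specified by the pair $(T_{r1}, T_{r2})$: a pattern is assigned to $y_1$, $y_2$, or $y_3$ according to whether $p(t_1|\textbf{x}) > 1 - T_{r1}$, $p(t_2|\textbf{x}) \geq 1 - T_{r2}$, or neither. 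Eq.~(\ref{equ:7}) in turn expresses $T_{r1}$ and $T_{r2}$ as explicit functions of $\theta_\lambda$. Hence the cost terms enter $R_1, R_2, R_3$ solely through the composition $\theta_\lambda \mapsto T_r(\theta_\lambda) \mapsto x_b(T_r)$ already recorded in eq.~(\ref{equ:29}).

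Second, since $E(y^*)$ and $Rej(y^*)$ in eq.~(\ref{equ:28}) are integrals of the densities carried by $\theta_C$ over exactly these regions, they inherit the same factorization: each is a function of $\theta_\lambda$ that depends on it only through $(T_{r1}, T_{r2})$. This places the model in the setting of Definition 4 with input vector $\theta = \theta_\lambda$ of size $p = 6$, intermediate vector $\gamma = T_r = (T_{r1}, T_{r2})$ of size $q = 2$, and boundary-point vector $\eta = x_b$ of size $r$. Applying Lemma 1 then yields $N_{IP}(y^*, \theta_\lambda) \leq \min(p, q, r) = \min(6, 2, r) \leq 2$, which is the asserted bound, valid irrespective of the number $r$ of boundary points.

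The redundancy conclusion then follows directly from Definition 2: because $E(y^*)$ and $Rej(y^*)$ are expressible through a parameter vector of size $2$, strictly smaller than the six-dimensional $\theta_\lambda$, the cost-term parametrization is parameter redundant. The same verdict survives the conventional normalizations $\lambda_{11} = \lambda_{22} = 0$ and $\lambda_{12} = 1$, which still leave the three free terms $\lambda_{13}, \lambda_{21}, \lambda_{23}$ --- one more than the independent-parameter bound of two.

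The main obstacle I anticipate is the first step: one must argue that no cost-term dependence ``leaks'' into $E(y^*)$ and $Rej(y^*)$ outside the two thresholds. This requires reading off from eq.~(\ref{equ:9}), rather than from the likelihood-ratio form~(\ref{equ:6a})--(\ref{equ:6c}), that the entire decision --- and therefore every region of integration in eq.~(\ref{equ:28}) --- is a function of $(T_{r1}, T_{r2})$ alone once $\theta_C$ is fixed. Once this factorization through a two-element intermediate vector is secured, the counting bound from Lemma 1 is mechanical, and the particular choice of which two quantities to treat as the intermediate parameters (for instance $T_{r1}, T_{r2}$, or equivalently the ratios $\delta_1, \delta_2$ of eqs.~(\ref{equ:6a})--(\ref{equ:6b})) is immaterial to the bound.
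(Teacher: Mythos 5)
Your proposal is correct and follows essentially the same route as the paper: it invokes the composition $\theta_\lambda \mapsto T_r(\theta_\lambda) \mapsto x_b(T_r)$ of eq.~(16) together with Lemma~1 to obtain $N_{IP}(y^*,\theta)\leq\min(6,2,r)=2$, and then observes that the three cost terms left free after the normalization $\lambda_{11}=\lambda_{22}=0$, $\lambda_{12}=1$ exceed this bound. The only cosmetic difference is that the paper makes the redundancy concrete by exhibiting $\lambda_{13}$ and $\lambda_{23}$ explicitly as functions of $(T_{r1},T_{r2},\lambda_{21})$ in eq.~(17), whereas you argue the same point abstractly from Definition~2; your extra care in verifying (via the threshold form of the decision rule in eq.~(9)) that no cost-term dependence leaks past the two thresholds is precisely the content the paper compresses into eq.~(16).
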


\begin{proof}
 Applying (\ref{equ:27}) and (\ref{equ:26}) in Lemma 1, one can have $N_{IP}(y^*, 
\mathbf{\theta}) \leq \min(p=6, q=2, r=4)=2$ for defining Bayesian classifiers from 
$\mathbf{\theta}$. However, when imposing three constraints on $\lambda_{11} = \lambda_{22} = 0$, 
and $\lambda_{12} =1$, $\mathbf{\theta}$ will provide three free parameters in the cost matrix in a form of:
\begin{equation}
 \label{equ:30}
\begin{array}{r@{=}l}
\lambda_{21} & \lambda_{21}\\
\lambda_{13} & \dfrac{T_{r1}(T_{r2}*\lambda_{21} + T_{r2} - \lambda_{21})}{T_{r1} + T_{r2} - 1}\\
\lambda_{23} & \dfrac{T_{r2}(T_{r1}*\lambda_{21} + T_{r1} - \lambda_{21})}{T_{r1} + T_{r2} - 1},
\end{array}
\end{equation}
which implies a parameter redundancy for calculating Bayesian $E(y^*)$ and $Rej(y^*)$.
\end{proof}

\begin{remark}
 Theorem 2 describes that Bayesian classifiers with a reject option will suffer a 
difficulty of uniquely interpreting cost terms. For example, one can even enforce 
the following two settings:
\begin{equation}
\left\{ \begin{array}{r@{}l}
& \lambda_{11} = 0, ~\lambda_{12} = 1, ~ 0 \leq \lambda_{13} \leq 1,\\ \nonumber
& \lambda_{21} = 1, ~\lambda_{22} = 0, ~ 0 \leq \lambda_{23} \leq 1, \nonumber
\end{array} \right.
\end{equation}
or
\begin{equation}
\left\{ \begin{array}{r@{}l}
& \lambda_{11} = 0, ~1 \leq \lambda_{12}, ~\lambda_{13} = 1,\\ \nonumber
& 1 \leq \lambda_{21}, ~\lambda_{22} = 0, ~\lambda_{23} = 1. \nonumber
\end{array} \right.
\end{equation}
for achieving the same Bayesian classifier, as well as their $E(y^*)$ and $Rej(y^*)$. 
However, the two sets of settings entail different meanings and do not show the
equivalent relations except through eq. (\ref{equ:7}). Hence, a confusion may be introduced 
when attempting to understand behaviors of error and reject rates with respects to different sets of cost 
terms. For this reason, cost terms may present an intrinsic problem for defining a 
generic form of settings in cost sensitive learning if a reject option is enforced.
\end{remark}

\begin{remark}
 While Theorem 2 only shows an estimation of upper-bound of $N_{IP}(y^*, \mathbf{\theta})$ 
for Bayesian classifiers with a reject option because of missing a closed-form solution of 
$E(y^*, \mathbf{\theta})$, one can prove on $N_{IP}(y^*, \mathbf{\theta})=1$ for Bayesian 
classifiers without rejection. A single independent parameter from the cost terms can be 
formed as $(\lambda_{12} - \lambda_{11})/(\lambda_{21} - \lambda_{22})$.
\end{remark}

\begin{remark}
 We suggest to apply independent parameters for the design and cost analysis of Bayesian 
classifiers. The total number of independent parameters of $N_{IP}(y^*, \mathbf{\theta})$ 
is changeable and dependent on the reject option of Bayesian classifiers. If rejection is not considered, 
we suggest $\mathbf{\theta}=(\lambda_{11}=\lambda_{22}=0, \lambda_{12}=1, \lambda_{21}>0)$ 
for the cost or error sensitivity analysis. A single independent cost parameter, $\lambda_{21}$, 
is capable of governing complete behaviors of error rate. For a reject option, we suggest 
$\mathbf{\theta}=(0 \leq T_{r1}, 0 \leq T_{r2}, ~and~ T_{r1}+T_{r1} \leq 1)$ for the cost, 
error, or reject sensitivity analysis, which will lead to a unique interpretation to the analysis.
\end{remark}

\subsection{Examples of Bayesian Classifiers on Univariate 
Gaussian Distributions}
\label{subsec:2.2}
This section will consider abstaining Bayesian classifiers on Gaussian distributions. 
As a preliminary study, a univariate feature in \cite{duda2001} is adopted for the 
reason of showing theoretical fundamentals as well as the closed-form solutions. Therefore, 
if the relevant knowledge of $p(t_i)$ and $p(x|t_i)$ is given, one can depict the plots 
of $p(t_i|x)$ from calculation of eq. (\ref{equ:1}) (Fig. 1). Moreover, when 
$\lambda_{ij}$ is known, the classification regions 
of $R_1$ to $R_3$ in terms of $x_{bj}$ will be fixed for Bayesian classifiers. After the 
regions $R_1$ to $R_3$, or $x_{bj}$, are determined, Bayesian risk will be obtained directly. 
One can see that these boundaries can be obtained from the known data of $\delta_i$ when 
solving an equality equation on (\ref{equ:6a}) or (\ref{equ:6b}): 
\begin{equation}
 \label{equ:13}
\dfrac{p(x=x_c|t_1)p(t_1)}{P(x=x_c|t_2)p(t_2)} = \delta_{i}
\end{equation}
The data of $\delta_i$ can be realized either from cost terms $\lambda_{ij}$, or from 
threshold $T_{ri}$ (see eq. (6)). By substituting the exact data of $p(t_i)$ and
$p(x|t_1)\sim N(\mu_i, \sigma_i)$ for Gaussian 
distributions, where $\mu_i$ and $\sigma_i$ represent the mean and standard deviation to the 
$i$th class, and the data of $\delta_i$ (say, for $\delta_1=(1-T_{r1})/T_{r1}$ from 
the given $T_{r1}$) into (\ref{equ:13}), one can obtain the closed-form solutions to 
the boundary points (say, for $x_{b1}$ and $x_{b4}$):
\begin{equation}\tag{19a}
\label{equ:14a}
x_{b1,4}=\dfrac{\mu_2\sigma^2_1 - \mu_1\sigma^{2}_{2}}{\sigma^{2}_{1} - \sigma^{2}_{2}} 
\mp \dfrac{\sigma_1 \sigma_2 \sqrt{\alpha}}{\sigma^{2}_{1} - \sigma^{2}_{2}}, ~ if ~ \sigma_1 \neq \sigma_2
\end{equation}

\begin{equation}\tag{19b}
\label{equ:14b}
x_{b1}=\dfrac{\mu_1 + \mu_2}{2} + \dfrac{\sigma^2}{\mu_2 - \mu_1}ln(\dfrac{p(t_1)}{p(t_2)} 
\dfrac{1}{\delta_1}), ~ if ~ \sigma_1 = \sigma_2 = \sigma
\end{equation}
where $\alpha$ is an intermediate variable defined by:
\begin{equation}\tag{19c}
\label{equ:14c}
\alpha = (\mu_1 - \mu_2)^2 - (2\sigma^{2}_{1} - 2\sigma^{2}_{2})ln(\dfrac{p(t_1)\sigma_2}{p(t_2)\sigma_1} 
\dfrac{1}{\delta_1}).
\end{equation}
Eq. (19) is also effective for Bayesian classifiers in the case of ``{\it no rejection}''. 
However, only cost terms, $\lambda_{ij} (i,j=1,2)$, will define the data of $\delta_1$. 
The general solution to abstaining classifiers has four boundary points by substituting 
two threshold $T_{r1}$ and $T_{r2}$, respectively. For the conditions shown in 
Fig. 1d, $T_{r1}$ will lead to $x_{b1}$ and $x_{b4}$, and $T_{r2}$ to $x_{b2}$ 
and $x_{b3}$, respectively. Eq. (\ref{equ:14a}) shows a general form for achieving 
two boundary points from one data point of $\delta_1$, and eq. (\ref{equ:14b}) is specific 
for reaching a single boundary point only when the standard deviations of two classes 
are the same. Substituting the other data of $\delta_2$ into eq. (19) will yield 
another pair of data $x_{b2}$ and $x_{b3}$, or a single one $x_{b2}$, in a similar 
form of eq. (19).

\begin{table*}
 \label{tab:1}
\renewcommand{\arraystretch}{1.3}
\centering
\caption{Rejection Settings for Bayesian Classifiers in univariate Gaussian Distributions 
\newline
$(x_{b1}<x_c<x_{b2} \quad or \quad x_{b1}<x_{c1}<x_{b2}<x_{b3}<x_{c2}<x_{b4})$}
\setlength{\tabcolsep}{3pt}
\centering
\begin{tabular}{|c|c|c|c|}
 \hline
Cross-over Point(s) & Rejection & Reject & \\
(Reference Figure) & Thresholds & region(s) & Remarks\\
\hline
 & $T_{r1} = 0.5$, $T_{r2} = 0.5$ & $\emptyset$ & No Rejection \\
 \cline{2-4}
 & $T_{r1} \geq 0.5$, $1-\max(p(t_2|x))<T_{r2}<0.5$ & $ \left[ x_{c1}, x_{b2} 、\right) $ and 
$ \left( x_{b3}, x_{c2} \right] $ & - \\
 \cline{2-4}
 Two & $T_{r1}<0.5$, $T_{r2} \geq 0.5$ & $ \left[ x_{b1}, x_{c1} \right) $ and $ \left( x_{c2}, x_{b4} \right] $ & - \\
 \cline{2-4} 
(Fig. 1d)  & $T_{r1}<0.5$, $1-\max(p(t_2|x))<T_{r2}<0.5$ & $ \left[ x_{b1}, x_{b2} \right) $ and $\left( x_{b3}, x_{b4} \right] $ & General Rejection \\
 \cline{2-4}
 & $T_{r1} < 0.5$, $T_{r2} < 1-\max(p(t_2|x))$ & $[x_{b1}, x_{b4}]$ & 
``{\it Class-1 and Reject-class}'' Classification\\
\cline{2-4}
& $T_{r1}=0$, $T_{r2} < 1$ & $(-\infty, x_{b2})$ and  $(x_{b3}, \infty)$ & 
``{\it Class-2 and Reject-class}'' Classification\\
\hline
 & $T_{r1}=0.5$, $T_{r2}=0.5$ & $\emptyset$ & No Rejection \\
\cline{2-4}
One & $T_{r1} \geq 0.5$, $T_{r2} < 0.5$ & $[x_{c}, x_{b2})$ & - \\
\cline{2-4}
(Fig. 1c) & $T_{r1} < 0.5$, $T_{r2} \geq 0.5$ & $[x_{b1}, x_{c})$ & - \\
\cline{2-4}
 & $T_{r1} < 0.5$, $T_{r2} < 0.5$ & $[x_{b1}, x_{b2})$ & General Rejection \\
\hline
  & $T_{r1} \geq 1-\min(p(t_1|x))$ & $\emptyset$ & ``{\it Majority-taking-all}'' Classification \\
  \cline{2-4}
  & $T_{r1} < 1-\min(p(t_1|x))$ & & ``{\it Majority-class and Reject-class}''\\
 Zero & $T_{r2} < 1-\max(p(t_2|x))$ & $[x_{b1}, x_{b4}]$ & Classification\\
 \cline{2-4}
 (Fig. 1d) & $T_{r1} < 1-\min(p(t_1|x))$ & & \\
  & $T_{r2} > 1-\max(p(t_2|x))$ & $[x_{b1}, x_{b2})$ and $(x_{b3}, x_{b4}]$ & General Rejection\\
 \cline{2-4}
 & $T_{r1} = 0$ & & ``{\it Minority-class and Reject-class}'' \\
 & $T_{r2} > 1-\max(p(t_2|x))>0.5$ & $(-\infty, x_{b2})$ and $(x_{b3}, \infty)$ & Classification\\
 \hline
 Zero, one and Two & & & \\
 (Fig.1) & $T_{r1} = T_{r2} = 0$ & $(-\infty, \infty)$ & Rejection to All\\
\hline
\end{tabular}
\end{table*}

Like the solution for boundary points, cross-over point(s) can also be obtained 
from solving eq. (\ref{equ:13}) or (19) by substituting $\delta_i=1$. One can prove that 
three specific cases will be met with the cross-over point(s) from the solution of 
eq. (\ref{equ:13}), namely, two, one, or zero cross-over point(s). The case for the two 
cross-over points appears only when $\alpha > 0$ in eq. (\ref{equ:14c}), and two curves 
of $p(t_1|x)$ and $p(t_2|x)$ demonstrate the non-monotonicity (Fig. 1b) through the equality 
$p(t_1|x)=1-p(t_2|x)$. When the associated standard deviations are equal for the two 
classes, i.e., $\sigma_1 = \sigma_2$, only one cross-over point appears, which corresponds 
to the monotonous curves of $p(t_1|x)$ and $p(t_2|x)$ (Fig. 1a). The case for the zero 
cross-over point occurs when $\alpha < 0$, which corresponds to no real-value (but complex-value) 
solution to eq. (\ref{equ:14a}) and to situations of non-monotonous curves of $p(t_1|x)$ 
and $p(t_2|x)$. In the followings, we will discuss several specific cases for rejections 
with respect to the cross-over points between the $p(t_1|x)$ and $p(t_2|x)$ curves, as well 
as to the associated settings on $T_r$ and $\lambda_{ij}$. A term is applied to 
describe every case. For example, ``Case$\_ k \_$BU'' indicates ``{\it k}'' for the $k$th 
case, ``B'' (or ``M'') for Bayesian (or mutual-information) classifiers, and ``G'' 
(or ``U'') for Gaussian (or uniform) distributions.
\newline
$Case\_1\_BG:$ {\it No rejection.}\newline
For a binary classification, Chow \cite{chow1970} showed that, when $T_{r1}=T_{r2}\geq0.5$, 
there exists no rejection for classifiers. The novel constraint of $T_{r1}+T_{r2}\leq1$ shown 
in eq. (\ref{equ:6e}) suggests that the setting should be $T_{r1}=T_{r2}=0.5$ when the thresholds 
are the input data. Users need to specify an option for ``{\it no rejection}'' 
or ``{\it rejection}'' as an input. When ``{\it no rejection}'' is selected, the 
conventional scheme of cost terms from a two-by-two matrix will be sufficient. 
Any usage of a two-by-three matrix will introduce some confusion that will be 
illustrated in the later section by Example 1. In addition, one cannot consider
$\lambda_{13} = \lambda_{23}=0$ as the defaults for the cost matrix in this case.
\newline
$Case\_2\_BG:$ {\it Rejection to all or to a complete class.}\newline
In discussing this case, we relax the constraints in eq. (\ref{equ:6e}) for 
including the zero values of the thresholds. Chow \cite{chow1970} showed that, 
whenever $T_r=0$, a classifier will reject all patterns. Substituting zero 
values for thresholds into eq. (\ref{equ:7}), one will obtain solutions 
for $\lambda_{11} = \lambda_{22} = \lambda_{13} = \lambda_{23} = 0$. These 
results imply that no cost is received even for a reject decision to a 
pattern. Obviously, a case like this should be avoided. 
In some situations, if one intends to reject a complete class 
(say, Class 1), its associated cost terms should be set to zero 
(say, $\lambda_{11} = \lambda_{13} = 0$). We call these situations 
as ``{\it one-class and reject-class}'' classification, since only 
two categories are identified, that is, ``Class 2'' and ``Reject Class'', 
respectively. 
\newline
$Case\_3\_BG:$ {\it Rejection in two cross-over points $x_{c1}$ 
and $x_{c2}$.}\newline
The necessary condition for realizing this case is derived from eq. 
(18) for $\alpha > 0$ while assuming $\delta_i=1$:
\setcounter{equation}{19}
\begin{equation}
 \label{equ:19}
\dfrac{\lambda_{12} - \lambda_{11}}{\lambda_{21} - \lambda_{22}} < 
\dfrac{p(t_2)\sigma_1}{p(t_1)\sigma_2}e^{\dfrac{\mu_1 - \mu_2}{2(\sigma^{2}_{1} - \sigma^{2}_{2})}}
\end{equation}
The general situation within this case is when $T_{r1}<0.5$ and $1-\max(p(t_2|x))<T_{r2}<0.5$, 
in which the reject region $R_3$ is divided by two ranges. 
When $T_{r1}<0.5$ and $T_{r2}<1-\max(p(t_2|x))<0.5$, only one 
class is identified, but all other patterns are classified into 
a reject class. Therefore, we refer this situation as ``{\it 
Class 1 and Reject-class}'' classification. Table I also lists 
the other situations for the rejections from the different settings on $T_{rj}$.
\newline
$Case\_4\_BG:$ {\it Rejection in one cross-over point $x_c$.}\newline
The general condition for realizing this case in the context of 
classifications is not based from setting an equality condition on (\ref{equ:19}) 
for $\alpha = 0$. We neglect such setting 
in this case, but assign it into Case$\_5\_$BG. As demonstrated in eq. 
(\ref{equ:14b}), the general condition of this case is a simply setting 
$\sigma_1 = \sigma_2$. Since the monotonicity property is enabled for
the curves of $p(t_1|x)$ and $p(t_2|x)$ in this case, a single reject region 
is formed (Fig. 1c).
\newline
$Case\_5\_BG:$ {\it Rejection in zero cross-over point.}\newline
The general condition for realizing this case corresponds to a violation 
of the criterion on (\ref{equ:14a}), or $\alpha < 0$ in (\ref{equ:19}). 
In this case, one class always shows a higher value of the posterior 
probability distribution over the other one in the whole domain of $x$. 
From definitions in the study of class imbalanced dataset \cite{japkowicz2002}
\cite{chawla2004}, if $p(t_1) > p(t_2)$ in binary classifications, Class 1 
will be called a ``{\it majority}'' class and Class 2 a ``{\it minority}'' class. 
Supposing that $p(t_1|x)>p(t_2|x)$, when $T_{r1}>1-\min(p(t_1|x))$, all patterns 
will be considered as Class 1. We call these situations 
as a ``{\it Majority-taking-all}'' classification. Due to the constraints 
like $T_{r1}+T_{r2} \leq 1$ and $p(t_1|x)+p(t_2|x)=1$, one is unable to 
realize a ``{\it Minority-taking-all}'' classification. 
When $T_{r1}<1-\min(p(t_1|x))$ and $T_{r2}<1-\max(p(t_2|x))$, all patterns 
will be partitioned into one of two classes, that is, majority and rejection. 
We call these situations ``{\it Majority-class and Reject-class}'' classifications. 
The situations of 
``{\it Minority-class and Reject-class}'' classification occur 
if $T_{r2}>1-\max(p(t_2|x))>0.5$ and $T_{r1}=0$.  

Since the study of imbalanced data learning received more attentions recently 
\cite{chawla2004}\cite{zhou2006}\cite{he2009}, 
one related theorem of Bayesian classifiers is derived below for elucidating their important features.

\begin{theorem}
 Consider a binary classification with an exact knowledge of one-dimensional 
Gaussian distributions. If a zero-one cost function is applied, Bayesian classifiers 
without rejection will satisfy the following rule:
\begin{equation}
 \label{equ:16}
 \begin{array}{r@{}}
if~ p_{min}=\min(p(t_1), p(t_2))\rightarrow 0, ~ and \\ 
~ \lambda_{11} = \lambda_{22} = 0, \lambda_{12} = \lambda_{21} = 1 \\
~ then ~E \rightarrow E_{max} = p_{min} ,  \end{array}
\end{equation}
which indicates that the classifiers have a tendency of reaching 
the maximum Bayesian error, $ E_{max}$,  by misclassifying all rare-class 
patterns in imbalanced data learning. 
\end{theorem}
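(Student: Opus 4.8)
The plan is to specialize the no-rejection rule of Theorem~1 to the zero-one cost and then track how the minority-class decision region collapses as its prior vanishes. First I would substitute $\lambda_{11}=\lambda_{22}=0$ and $\lambda_{12}=\lambda_{21}=1$ into the ``No rejection'' branches of (\ref{equ:6a})--(\ref{equ:6b}), which gives $\delta_1=\delta_2=1$, so the classifier is the maximum-a-posteriori rule: decide $y_1$ where $p(t_1)p(x|t_1)>p(t_2)p(x|t_2)$ and $y_2$ otherwise. With $R_3=\emptyset$, the error in (\ref{equ:28}) is $E=E_1+E_2$ with $E_1=p(t_1)\int_{R_2}p(x|t_1)dx$ and $E_2=p(t_2)\int_{R_1}p(x|t_2)dx$. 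I then assume without loss of generality that Class~2 is the rare class, i.e. $p_{min}=p(t_2)$ and $p(t_1)=1-p(t_2)\to1$.

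The core identity is obtained by introducing $P(R_2|t_2)=\int_{R_2}p(x|t_2)dx$. Then $E_2=p_{min}(1-P(R_2|t_2))$, while the defining inequality $p(t_1)p(x|t_1)<p(t_2)p(x|t_2)$ on $R_2$ yields $E_1=\int_{R_2}p(t_1)p(x|t_1)dx<\int_{R_2}p(t_2)p(x|t_2)dx=p_{min}P(R_2|t_2)$. Consequently $E_1\to0$ and $E_2\to p_{min}$, hence $E\to p_{min}$, the moment the single quantity $P(R_2|t_2)$ is shown to vanish. This also pins down the label $E_{max}$: since the trivial ``majority-takes-all'' rule already attains error exactly $p_{min}$ and the Bayesian rule is optimal, $E\le p_{min}$ always holds, so $p_{min}$ is the largest value the Bayesian error can approach, reached precisely by misclassifying almost the entire rare class ($E_2\to p_{min}$) while $E_1\to0$.

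It therefore remains to prove $P(R_2|t_2)\to0$ as $p_{min}\to0$, and this is where I expect the real work, carried out through the closed-form boundaries~(19). Setting $\delta_1=1$ in (\ref{equ:14c}) gives $\alpha=(\mu_1-\mu_2)^2-2(\sigma_1^2-\sigma_2^2)\ln\!\big(p(t_1)\sigma_2/(p(t_2)\sigma_1)\big)$, and I would split on the sign of $\sigma_1^2-\sigma_2^2$. If $\sigma_1=\sigma_2$, the single boundary (\ref{equ:14b}) equals a constant plus $\frac{\sigma^2}{\mu_2-\mu_1}\ln(p(t_1)/p(t_2))$, which runs to $\pm\infty$ on the side of $\mu_2$, so $R_2$ becomes an interval receding to infinity and $P(R_2|t_2)\to0$. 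If $\sigma_1>\sigma_2$, then $\alpha\to-\infty$, (\ref{equ:14a}) has no real root (the zero-cross-over Case~5), Class~1 dominates everywhere, $R_2=\emptyset$, and $P(R_2|t_2)=0$ exactly.

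The remaining and most delicate case is $\sigma_1<\sigma_2$, where Class~2 has the heavier tails: here $\alpha\to+\infty$ and the two roots of (\ref{equ:14a}) escape to $-\infty$ and $+\infty$, so $R_2$ is the pair of outer tails $(-\infty,x_{b4})\cup(x_{b1},\infty)$. The subtlety is that $R_2$ recedes only like $\sqrt{\ln(1/p_{min})}$, so one must still check the heavier Class~2 tail beyond it. Using the leading behavior $x_{b1}\sim\sigma_1\sigma_2\sqrt{2\ln(1/p_{min})/(\sigma_2^2-\sigma_1^2)}$ together with the Gaussian tail estimate, I would show $P(X>x_{b1}|t_2)$ decays like the positive power $p_{min}^{\,\sigma_1^2/(\sigma_2^2-\sigma_1^2)}$, and symmetrically at $x_{b4}$, so $P(R_2|t_2)\to0$. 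Combining the three cases gives $P(R_2|t_2)\to0$, and with the identity above this yields $E\to E_{max}=p_{min}$, confirming that the optimal classifier degenerates into sacrificing the entire rare class.
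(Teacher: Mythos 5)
Your proposal is correct, and it follows the same skeleton as the paper's proof (specialize the decision rule to the zero-one MAP case, then use the closed-form boundary formulas (19) with $\delta_i=1$ and split on the variances), but it is materially more complete in two respects. First, the paper's own proof only treats the equal-variance case ($x_c\to\infty$ via (\ref{equ:14b})) and the configuration where $\alpha<0$ yields zero cross-over points, which tacitly assumes the majority class has the larger variance ($\sigma_1>\sigma_2$); the case you flag as the delicate one, $\sigma_1<\sigma_2$, where $\alpha\to+\infty$ and two cross-over points persist with the minority region $R_2$ receding to the outer tails, is simply not addressed in the paper. Your quantitative resolution --- boundary growth of order $\sigma_1\sigma_2\sqrt{2\ln(1/p_{min})/(\sigma_2^2-\sigma_1^2)}$ combined with the Gaussian tail estimate giving $P(R_2|t_2)\sim p_{min}^{\sigma_1^2/(\sigma_2^2-\sigma_1^2)}\to 0$ --- is correct (the exponent is positive, and the crude estimate $E_1<p_{min}P(R_2|t_2)$ from the defining inequality on $R_2$ cleanly kills both error terms at once) and fills a genuine gap in the published argument. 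Second, where the paper justifies $E_{max}=p_{min}$ by pointing to the upper bound in Appendix B, you derive it more elementarily: the majority-takes-all rule attains error exactly $p_{min}$, and Bayes optimality forces $E\le p_{min}$, so $p_{min}$ is the supremum being approached. What the paper's version buys is brevity and a tie-in to the entropy bounds it develops anyway in Appendix B; what yours buys is an actually exhaustive case analysis and a self-contained optimality argument, at the cost of the asymptotic tail computation, which you have sketched at the right level of rigor (leading-order $\alpha\approx 2(\sigma_2^2-\sigma_1^2)\ln(1/p_{min})$ and the standard bound $P(X>x|t_2)\le e^{-(x-\mu_2)^2/2\sigma_2^2}$ suffice to make it airtight).
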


\begin{proof}
We will prove the misclassification of all rare-class patterns first.
Suppose $p(t_2)$ represents the prior probability of the ``{\it minority}'' 
or ``{\it rare}'' class in imbalanced data learning and consider the special case 
firstly on the equal variances for two classes (Fig. 1a). When $p(t_2)$ approaches 
to zero, $x_c$ will approach infinity from using eq. (\ref{equ:14b}) with 
$\delta_i=1$. This result indicates that Bayesian classifiers will assign all 
patterns into the ``{\it majority}'' class in classifications. When the variances are 
not equal, eqs. (\ref{equ:14a}) and (\ref{equ:14c}) with $\delta_i=1$ will be applicable 
(Fig. 1b). One can obtain the relation $\alpha < 0$ for the case that no cross-over 
point occurs on $p(t_i|x)$ plots when $p(t_2)$ approaches to zero. Only the ``{\it majority}'' 
class is identified from using Bayesian classifiers in this case.
The equality of $ E_{max} = p_{min}$ suggests an upper bound of Bayesian error 
(See Appendix B). If violating this bound, Bayesian classifiers will adjust themselves
for achieving the smallest error rate. 
\end{proof}

\subsection{Examples of Bayesian Classifiers on Univariate 
Uniform Distributions}
\label{subsec:2.3}

\begin{figure*}[!t]
\centering
\includegraphics[width=4.5in,height=1.2in,bb= 60 0 450 100]{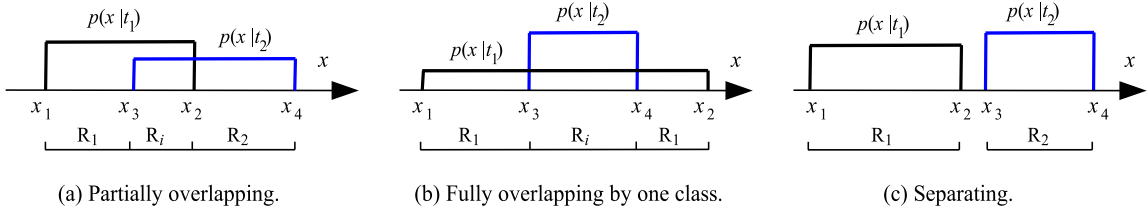}
\centering
\caption{Classification scenarios for univariate uniform distributions.}
\label{fig:2}
\end{figure*}

Chow \cite{chow1970} presented a study on rejection from Bayesian classifiers along 
uniform distributions for one-dimensional problems. This section will extend Chow's 
results by providing general formulas of parameterized distributions. A binary classification 
is considered. The two uniform distributions on two classes are given:
\begin{equation}\tag{22a}
 \label{equ:17a}
 p(x|t_1) = \left\{ \begin{array}{r@{\quad}l}
\dfrac{1}{x_2 - x_1} & when ~ x_1 \leq x \leq x_2\\
0~ \quad & otherwise
\end{array} \right.
\end{equation}

\begin{equation}\tag{22b}
 \label{equ:17b}
 p(x|t_2) = \left\{ \begin{array}{r@{\quad}l}
\dfrac{1}{x_4 - x_3} & when ~ x_3 \leq x \leq x_4\\
0~ \quad & otherwise
\end{array} \right.
\end{equation}
Three specific cases, shown in Fig. 2, will appear, namely, ``{\it Partially overlapping}'',  
``{\it Fully overlapping by one class}'', and ``{\it Separating}'' between two distributions 
for eq. (22). We will discuss each case with respect to their rejection settings. 
\newline
$Case\_1\_BU:$ {\it Partially overlapping between two distributions.}\newline
Suppose that the constraints for this case are:
\setcounter{equation}{22}
\begin{equation}
 \label{equ:18}
x_1 \leq x \leq x_4, ~and~ x_1 \leq x_3 \leq x_2 \leq x_4. 
\end{equation}
When the relevant knowledge of $p(t_i)$ and $p(x|t_i)$ is given, one is able 
to gain the posterior probabilities from eqs. (\ref{equ:1}) and (21) by a closed form:
\begin{equation}\tag{24a}
\label{equ:19a}
\begin{array}{r@{}l}
& p(t_1|x) = \\
& \left\{ \begin{array}{r@{~}l}
1 \qquad \qquad \qquad & when~ x_1 \leq x < x_3\\
\dfrac{p(t_1)(x_4 - x_3)}{p(t_1)(x_4 - x_3) + p(t_2)(x_2 - x_1)} & when~ x_3 \leq x \leq x_2\\
0 \qquad \qquad \qquad & otherwise
\end{array} \right.
\end{array}
\end{equation}

\begin{equation}\tag{24b}
\label{equ:19b}
\begin{array}{r@{}l}
& p(t_2|x) =\\
& \left\{ \begin{array}{r@{~}l}
1 \qquad \qquad \qquad & when~ x_2 \leq x < x_4\\
\dfrac{p(t_2)(x_2 - x_1)}{p(t_1)(x_4 - x_3) + p(t_2)(x_2 - x_1)} & when~ x_3 \leq x \leq x_2\\
0 \qquad \qquad \qquad & otherwise
\end{array} \right.
\end{array}
\end{equation}
Based on the Bayesian rules of eq. (\ref{equ:9}) and eq. (24), one can immediately 
determine $R_1 = [x_1, x_3)$ and $R_2 = [x_2, x_4]$ directly for Class 1 and Class 2, 
respectively, as shown in Fig. 2. The remaining range is denoted as $R_i = [x_3, x_2)$, 
since it needs to be identified further depending on the thresholds defined in (\ref{equ:7}). 
Due to the simplicity of the uniform distributions, one is able to realize analytical solutions 
directly for Bayesian classifiers. The probabilities of errors and rejects are calculated from :
\setcounter{equation}{24}
\begin{equation}
 \label{equ:20}
E = \left\{ \begin{array}{r@{~}l}
\dfrac{p(t_2)(x_2 - x_3)}{(x_4 - x_3)}, & ~ if ~ f(x \in R_i) = y_1\\
\dfrac{p(t_1)(x_2 - x_3)}{(x_2 - x_1)}, & ~ if ~ f(x \in R_i) = y_2\\
0,~ \qquad & ~ if ~ f(x \in R_i) = y_3
\end{array} \right.
\end{equation}
and 
\begin{equation}
 \label{equ:21}
Rej = \left\{ \begin{array}{r@{~}l}
& 0, \qquad  ~~~ if ~ f(x \in R_i) = y_1\\
& 0, \qquad ~~~ if ~ f(x \in R_i) = y_2\\
& (x_2 - x_3)[\dfrac{p(t_1)}{(x_2 - x_1)} + \dfrac{p(t_2)}{(x_4 - x_3)}],\\
& \qquad \qquad if ~ f(x \in R_i) = y_3
\end{array} \right.
\end{equation}
We use $f(x \in R_i) = y_j$ to describe a decision that $R_i$ is a range of Class $j$.   
Eq. (25) demonstrates that Bayesian classifiers with uniform distributions of 
classes will receive error either from Class 1 or from Class 2, but not both. When setting 
cost terms properly, zero error can be achieved with conditions of rejection on both classes 
as shown in eq. (26). It is interesting to observe that cost terms can only control 
the error types or give the appearance of rejection, but not the degree of them. This is significantly 
different from Bayesian classifiers with Gaussian distributions of classes. 
\newline
$Case\_2\_BU:$ {\it Fully overlapping by one class.}\newline
The constraints for this case are:
\begin{equation}
 \label{equ:22}
x_1 \leq x \leq x_2, ~and~ x_1 \leq x_3 \leq x_4 \leq x_2. 
\end{equation}
and the posterior probabilities are:
\begin{equation}\tag{28a}
\label{equ:23a}
\begin{array}{r@{}l}
& p(t_1|x) =\\
& \left\{ \begin{array}{r@{~}l}
1 \qquad \qquad \qquad & when~ x_1 \leq x < x_3\\
& or~ x_4 \leq x < x_2\\
\dfrac{p(t_1)(x_4 - x_3)}{p(t_1)(x_4 - x_3) + p(t_2)(x_2 - x_1)} & when~ x_3 \leq x \leq x_4\\
0 \qquad \qquad \qquad & otherwise
\end{array} \right.
\end{array}
\end{equation}

\begin{equation}\tag{28b}
\label{equ:23b}
\begin{array}{r@{}l}
& p(t_2|x) =\\
& \left\{ \begin{array}{r@{~}l}
\dfrac{p(t_2)(x_2 - x_1)}{p(t_1)(x_4 - x_3) + p(t_2)(x_2 - x_1)} & when~ x_3 \leq x \leq x_4\\
0 \qquad \qquad \qquad & otherwise
\end{array} \right.
\end{array}
\end{equation}
Following the similar way in the previous case, one can obtain the analytical results:
\setcounter{equation}{28}
\begin{equation}
 \label{equ:24}
E = \left\{ \begin{array}{r@{~}l}
p(t_2), \qquad & ~ if ~ f(x \in R_i) = y_1\\
\dfrac{p(t_1)(x_4 - x_3)}{(x_2 - x_1)}, & ~ if~ f(x \in R_i) = y_2\\
0,~ \qquad & ~ if~ f(x \in R_i) = y_3
\end{array}, \right.
\end{equation}
and
\begin{equation}
 \label{equ:25}
Rej = \left\{ \begin{array}{r@{~}l}
0, \qquad \qquad \qquad & ~ if~ f(x \in R_i) = y_1\\
0, \qquad \qquad \qquad & ~ if~ f(x \in R_i) = y_2\\
p(t_1)\dfrac{(x_4 - x_3)}{x_2 - x_1} + p(t_2), &~ if ~ f(x \in R_i) = y_3
\end{array}. \right.
\end{equation}
Specific solutions will be received in this case on Class 2, which is full overlapped 
within Class 1. All patterns within Class 2 may be misclassified or rejected fully 
depending on the settings of cost terms.
\newline
$Case\_3\_BU:$ {\it Separation between two distributions.}\newline
One is able to obtain the exact solutions without any error and reject. Cost terms are useless in this case.

\section{Mutual-information based Classifiers with A Reject Option}
\label{sec:3}
\subsection{Mutual-information based Classifiers}
\label{subsec:3.1}

\begin{definition}[Mutual-information classifier]
 A mutual-information classifier is the classifier which is obtained from
the maximization of mutual information over all patterns: 
\begin{equation}
 \label{equ:31}
y^+ = arg \max \limits_{y}NI(T = t, Y = y),
\end{equation}
where $T$ and $Y$ are the target variable and decision 
output variable, $t$ and $y$ are their values, respectively. For simplicity, 
we denote $NI(T=t, Y=y) = NI(T, Y)$ as  the normalized mutual information in a form of \cite{hu2008}:
\begin{equation}\tag{32a}
\label{equ:32a}
 NI(T, Y) = \dfrac{I(T, Y)}{H(T)}
\end{equation}
where $H(T)$ is the entropy based on the Shannon definition \cite{shannon1948} to the target variable, 
\begin{equation}\tag{32b}
\label{equ:32b}
 H(T) = - \sum \limits^{m}_{i = 1}p(t_i)log_{2}p(t_i),
\end{equation}
and $I(T,Y)$ is mutual information between two variables of $T$ and $Y$ \cite{cover2006}:
\begin{equation}\tag{32c}
\label{equ:32c}
I(T, Y) = \sum \limits_{i=1}^{m} \sum \limits_{j=1}^{m+1} p(t_i, y_j)log_{2}\dfrac{p(t_i, y_j)}{p(t_i)p(y_j)},
\end{equation}
where $m$ is a total number of classes in $T$. For binary classifications, we set $m=2$. 
In (32), $p(t, y)$ is the joint distribution between the two variables, and $p(t)$ and $p(y)$ are 
the marginal distributions which can be derived from \cite{cover2006}:
\setcounter{equation}{32}
\begin{equation}
 \label{equ:33}
p(t) = \sum \limits_{y} p(t, y), ~and~~p(y) = \sum \limits_{t}p(t, y).
\end{equation}
\end{definition}

Mathematically, eq. (\ref{equ:31}) expresses that $y^+$ is an optimal classifier in terms of 
the maximal mutual information, or relative entropy, between the target variable $T$ and 
decision output variable $Y$. The physical interpretation of relative entropy is a measurer 
of probability similarity between the two variables. Note that the present definition of 
$NI$ is asymmetry to the variables $T$ and $Y$ for the normalization term of 
$H(T)$ (=constant, for given $p(t))$, but will not make a difference for arriving at the 
optimal $y^+$ defined by (\ref{equ:31}). We adopt Shannon's definition of entropy for the 
reason that no free parameter is introduced. A normalization scheme is applied so that 
a relative comparison can be made easily among classifiers.

\begin{definition}[Augmented Confusion Matrix \cite{hu2008}]
 An augmented confusion matrix will include one column for a rejected class, which is 
added on a conventional confusion matrix: 
\begin{equation}
 \label{equ:34}
C = \left[ \begin{array}{r@{\cdots}l}
c_{11} \quad c_{12} \quad & \quad c_{1m} \quad c_{1(m+1)}\\
c_{21} \quad c_{22} \quad & \quad c_{2m} \quad c_{2(m+1)}\\
& \\
c_{m1} \quad c_{m2} \quad & \quad c_{mm} \quad c_{m(m+1)}
\end{array}\right],
\end{equation}
where $c_{ij}$ represents the number of the $i$th class that is classified as the $j$th class. 
The row data corresponds to the exact classes, and the column data corresponds to the 
prediction classes. The last column represents a reject class. The relations and constraints of an augmented 
confusion matrix are:
\begin{equation}
 \label{equ:35}
C_i = \sum \limits_{j=1}^{m+1}c_{ij}, ~C_i>0, ~c_{ij} \geq 0, ~i = 1,2, \cdots,m
\end{equation}
where $C_i$ is the total number for the $i$th class. The data for $C_i$ is known in classification problems.
\end{definition}

In this work, supposing that the input data for classifications are exactly known about 
the prior probability $p(t_i)$ and the conditional probability density function $p(\textbf{x}|t_i)$, 
one is able to derive the joint distribution matrix in association with the confusion matrix:
\begin{equation}
 \label{equ:36}
\begin{array}{r@{}l}
& p_{ij} = p(t_i, y_j) = \int \limits_{R_j}p(t_i)p(\textbf{x}|t_i)d\textbf{x} \approx \dfrac{c_{ij}}{n} = p_e(t_i, y_j),\\
& i = 1,2, \cdots, m, ~j = 1,2, \cdots, m+1
\end{array}
\end{equation}
where $R_j$ is denoted as the region in which every pattern $\textbf{x}$ is identified as the $j$th class, 
and  $p_e(t_i, y_j)$ is the empirical probability density for applications where only a 
confusion matrix is given. In those applications, the total number of patterns $n$ is generally known. 

Eq. (\ref{equ:36}) describes the approximation relations between the joint distribution 
and confusion matrix. If the knowledge about $p(t_i)$ and $p(\textbf{x}|t_i)$ are exactly known, 
one can design a mutual information classifier directly. If no initial information is known about 
$p(t_i)$ and $p(\textbf{x}|t_i)$, the empirical probability density of joint distribution, 
$p_e(t_i, y_j)$, can be estimated from the confusion matrix \cite{hu2008}. 
This treatment, based on the frequency principle of a confusion matrix, is not 
mathematically rigorous, but will offer a simple approach for classifiers to apply 
the entropy principle for wider applications.

\begin{figure*}[!t]
\begin{equation}\tag{42a}
\label{equ:42a}
\begin{array}{r@{}l}
p(t,y) = 
 \left[ \begin{array}{r@{}l}
\dfrac{p(t_1)}{2}[1-erf(X_{11})] ~~~ \dfrac{p(t_1)}{2}[1-erf(X_{12})] ~~~ \dfrac{p(t_1)}{2}[erf(X_{11})+erf(X_{12})]\\
\dfrac{p(t_2)}{2}[1-erf(X_{21})] ~~~ \dfrac{p(t_2)}{2}[1-erf(X_{22})] ~~~ \dfrac{p(t_2)}{2}[erf(X_{21})+erf(X_{22})]
\end{array}\right],
\end{array}
\end{equation}
\end{figure*}

Considering binary classifications, one will have the following formula for the joint 
distribution $p(t, y)$: 
\begin{equation}
 \label{equ:37}
\begin{array}{r@{}l}
& p(t,y) = \\
& \left[ \begin{array}{r@{}l}
& \int \limits_{R_1}p(t_1)p(\textbf{x}|t_1)d\textbf{x} ~ \int \limits_{R_2}p(t_1)p(\textbf{x}|t_1)d\textbf{x} ~ \int \limits_{R_3}p(t_1)p(\textbf{x}|t_1)d\textbf{x}\\
& \int \limits_{R_1}p(t_2)p(\textbf{x}|t_2)d\textbf{x} ~ \int \limits_{R_2}p(t_2)p(\textbf{x}|t_2)d\textbf{x} ~ \int \limits_{R_3}p(t_2)p(\textbf{x}|t_2)d\textbf{x}
\end{array}\right].
\end{array}
\end{equation}
The marginal distribution for $p(t)$ is in fact the given information of prior knowledge about the classes:
\begin{equation}
 \label{equ:38}
p(t) = (p(t_1), p(t_2))^T
\end{equation}
where the superscript ``$T$'' represents a transpose, and the marginal distribution for $p(y)$ is: 
\begin{equation}
 \label{equ:39}
\begin{array}{r@{}l}
& p(y) = (p(y_1), p(y_2), p(y_3)) = (\int \limits_{R_1}Qd\textbf{x}, \int \limits_{R_2}Qd\textbf{x}, \int \limits_{R_3}Qd\textbf{x})\\
& Q = p(t_1)p(\textbf{x}|t_1) + p(t_2)p(\textbf{x}|t_2).
\end{array}
\end{equation}
Substituting (\ref{equ:37}) - (\ref{equ:38}) into (32), one can obtain the formula 
of $NI$ in terms of $p(t_i)$ and $p(\textbf{x}|t_i)$. When the prior knowledge of $p(t_i)$ 
is given, the conditional entropy $H(T)$ in eq. (\ref{equ:32b}) will be unchanged during 
classifier learnings. This is why we use this term to normalize the mutual information 
in (\ref{equ:32a}).  

\subsection{Examples of Mutual-information Classifiers on Univariate 
Gaussian Distributions}
\label{subsec:3.2}
Mutual-information classifiers, like Bayesian classifiers, also provide a general 
formulation to classifications. They are able to process classifications
with or without rejection. 
This section will aim at deriving novel formulas necessary 
for the design and analysis of mutual-information classifiers under assumptions of Gaussian 
distributions. The assumptions, or given input data, for the derivations are kept the same 
as those for Bayesian classifiers shown in Section II, except that cost terms 
of $\lambda_{ij}$ are not given as the input, but will be displayed as the output of 
the classifiers. In other words, mutual information classifiers will automatically calculate 
the two thresholds that can lead to the cost terms through eq. (\ref{equ:7}). However, due to 
a redundancy among six cost terms, one will fail to obtain the unique solution of the cost 
terms, which is demonstrated in Example 1 of Section IV.

\begin{table*}
 \label{tab:2}
\renewcommand{\arraystretch}{1.3}
\caption{Classification Regions for Mutual-information Classifiers in Univariate Gaussian Distributions 
of Fig. 1 $(x_{b1}<x_{b2}<x_{b3}<x_{b4})$}
\setlength{\tabcolsep}{1pt}
\centering
\begin{tabular}{|c|c|c|c|c|c|}
 \hline
Reject Option & Cross-over Point(s) & Boundary Point(s) & Class of $R_1$ 
& ~~~~~~~~Class of $R_2 ~~~~~~~~$ & Class of $R_3$\\
\hline
 \multirow{2}{1in}{\centering No Rejection} & $x_c$ & $x_b$ & $(-\infty, x_b)$ 
& $[x_b, \infty)$ & $\emptyset$\\ 
						\cline{2-6}
								      & $x_{c1}, x_{c2}$ 
& $x_{b1}, x_{b2}$ & $(-\infty, x_{b1})$ and $(x_{b2}, \infty)$ & $[x_{b1}, x_{b2}]$ & $\emptyset$\\ 
\hline
\multirow{2}{1in}{\centering Rejection} & $x_c$ & $x_{b1}, x_{b2}$ & $(-\infty, x_{b1})$ 
& $[x_{b2}, \infty)$ & $[x_{b1}, x_{b2})$\\ 
						\cline{2-6}
								      & $x_{c1}, x_{c2}$ 
& $x_{b1}, x_{b2}, x_{b3}, x_{b4}$ & $(-\infty, x_{b1})$ and $(x_{b2}, \infty)$ 
& $[x_{b2}, x_{b3}]$ & $[x_{b1}, x_{b2})$ and $(x_{b3}, x_{b4}]$\\ 
\hline
\end{tabular}
\end{table*}

Generally, one is unable to derive a closed-form solution to mutual-information classifiers. 
One of the obstacles is the nonlinear complexity of solving error functions. Therefore, 
this work only provides semi-analytical solutions for mutual information classifiers. 
When substituting $p(t_i)$ and $p(x|t_i)$ into eqs. (\ref{equ:31}) and (32), one will 
encounter the process of solving an inverse problem on the following function:
\begin{equation}
 \label{equ:40}
\max \limits_{y \in Y} NI(T, Y) = \max f(x, \mathbf{\theta} = (p(t_i), p(x|t_i), x_{bj})),
\end{equation}
for searching the boundary points $x_{bj}$ from error functions. Only numerical solutions 
can be obtained for $x_{bj}$, except for a special case. Whenever a reject option is set, 
mutual-information classifiers will generate classification regions, $R_i~ (i=1,2,3)$, 
automatically according to the given data of $p(t_i)$ and $p(x|t_i)$, as shown in Table II. 
In the followings, some specific cases of mutual-information classifiers will be discussed 
in related to a reject option.
\newline
$Case\_1\_MG:$ {\it No rejection in one cross-over point $x_c$ when $p(t_1)=p(t_2)$ 
and $\sigma_1=\sigma_2$.}\newline
This is a very special case where one is able to obtain a closed-form solution to 
mutual-information classifiers. Under the conditions of $p(t_1)=p(t_2)$, $\sigma_1=\sigma_2$, 
and two by two joint distribution matrix for no rejection, one can get a single boundary 
point $x_b$, coincident to the cross-over point $x_c$, for partitioning the classification regions:
\begin{equation}
 \label{equ:41}
\begin{array}{r@{}l}
& \qquad \qquad x_b = x_c = \dfrac{\mu_1 + \mu_2}{2},\\
& if~ \mu_1 < \mu_2 ~then~ R_1 = (-\infty, x_b), R_2 = [x_b, \infty), R_3 = \emptyset.
\end{array}
\end{equation}
This result exhibits similar results for Bayesian classifiers, which leads to the same error 
values between the two types of classifiers. Therefore, eq. (\ref{equ:41}) indicates that
$y^+ = y^*$ to be fully equivalent between mutual-information classifiers and Bayesian 
classifiers under the conditions of $p(t_1)=p(t_2)$ and $\sigma_1=\sigma_2$ when no reject 
option is selected. 
\newline
$Case\_2\_MG:$ {\it Rejection in one cross-over point $x_c$ and $\sigma_1=\sigma_2$.}\newline
When we relax the condition in the case above on $p(t_1) \neq (t_2)$ and with a reject 
option, the solutions to mutual-information classifiers become not fully analytical. 
The key step for missing such an analytical solution comes from a determination of 
$x_{bj}$. In this case, due to the condition that $\sigma_1=\sigma_2$, one will have 
a single cross-over point $x_c$ as the general case in binary classifications for 
Gaussian distributions. If a reject option is selected, one will generally have two 
boundary points $x_{b1}$ and $x_{b2}$. Suppose $\mu_1<\mu_2$ and $x_{b1}<x_{b2}$, one 
can partition classification regions as: $R_1 = (-\infty, x_{b1}), R_2 = [x_{b2}, 
\infty)$, and $R_3 = [x_{b1}, x_{b2})$. Supposing the two boundary points are given, 
one can have a closed-form formula on eq. (\ref{equ:37}):

                 (Please see the equation on the top of this page)
\\
where $erf(\cdot)$ is an error function, and 
\begin{equation}\tag{42b}
 \label{equ:42b}
X_{ij} = \dfrac{\mu_i - x_{bj}}{\sqrt{2}\sigma_i},~i = 1,2, ~j = 1,2.
\end{equation}
In this work, we adopt a numerical approach to search the results on $x_{b1}$ and $x_{b2}$. 
Whenever these values are known, one can get the error rate and reject 
rate from:
\begin{equation}\tag{43a}
 \label{equ:43a}
\begin{array}{r@{}l}
& E = E_1 + E_2 = p(t_i =1, y_j = 2) + p(t_i = 2, y_j = 1)\\
& \quad = \dfrac{p(t_1)}{2}[1-erf(X_{12})] + \dfrac{p(t_2)}{2}[1-erf(X_{21})]
\end{array}
\end{equation}

\begin{equation}\tag{43b}
 \label{equ:43b}
\begin{array}{r@{}l}
& Rej = Rej_1 + Rej_2 \\
& \quad = p(t_i =1, y_j = 3) + p(t_i = 2, y_j = 3)\\
& \quad = \dfrac{p(t_1)}{2}[erf(X_{11})+erf(X_{12})] \\
& \qquad \qquad \qquad + \dfrac{p(t_2)}{2}[erf(X_{21})+erf(X_{22})]
\end{array}
\end{equation}
\newline
$Case\_3\_MG:$ {\it Rejection in two cross-over points.}\newline
This is a general case for mutual-information classifiers 
in which four boundary points, $x_{bj}$, are formed. When the four points 
obtained numerically during solving eq. (\ref{equ:31}), the classification regions 
$R_1$ to $R_3$ will be set as shown in Table II. With the condition of 
$x_{b1}<x_{b2}<x_{b3}<x_{b4}$, the closed-form solution of $p(t, y)$ can be given 
in a similar way of eq. (42). Additionally, both error and reject rates can be evaluated 
from $p(t, y)$. For comparing with Bayesian classifiers, the equivalent rejection 
thresholds are derived from the given data of $x_{bj}$:
\begin{equation}\tag{44a}
 \label{equ:44a}
\begin{array}{r@{}l}
& T_{r1} = 1 - p(t_1|x = x_{b1}) \\
&= 1 - \dfrac{p(t_1)\sigma_2 e^{\dfrac{-(x_{b1} - \mu_1)^2}{2\sigma^{2}_{1}}}}{p(t_1)
\sigma_2 e^{\dfrac{-(x_{b1} - \mu_1)^2}{2\sigma^{2}_{1}}} + p(t_2)\sigma_1 e^{\dfrac{-(x_{b1} 
- \mu_2)^2}{2\sigma^{2}_{2}}}}
\end{array}
\end{equation}

\begin{equation}\tag{44b}
 \label{equ:44b}
\begin{array}{r@{}l}
& T_{r2} = 1 - p(t_2|x = x_{b2}) \\
&= 1 - \dfrac{p(t_2)\sigma_1 e^{\dfrac{-(x_{b2} - \mu_2)^2}{2\sigma^{2}_{2}}}}{p(t_1)
\sigma_2 e^{\dfrac{-(x_{b2} - \mu_1)^2}{2\sigma^{2}_{1}}} + p(t_2)\sigma_1 e^{\dfrac{-(x_{b2} 
- \mu_2)^2}{2\sigma^{2}_{2}}}}
\end{array}
\end{equation}
With the condition of $x_{b1}<x_{b2}<x_{b3}<x_{b4}$ shown in Fig. 1d, substituting either $x_{b1}$ 
or $x_{b4}$ into (44) will give the same value on $T_{r1}$, and a similar one for $x_{b2}$ 
or $x_{b3}$ on $T_{r2}$. The results of $T_{r1}$ and $T_{r2}$ indicate that mutual-information 
classifiers will automatically search the rejection thresholds for balancing the error rate and 
reject rate for the given data of classes. This specific feature will be discussed in Section IV.

\subsection{Examples of Mutual-information Classifiers on Univariate 
Uniform Distributions}
\label{subsec:3.3}
When comparing with Bayesian classifiers, we examine mutual-information classifiers on uniform 
distributions in this section. The two classes and their conditional probability density functions 
are given in (22). Three cases will be discussed below.
\newline
$Case\_1\_MU:$ {\it Partially overlapping between two distributions.}\newline
In this case (Fig. 2a), one needs to construct joint distribution $p(t, y)$ first. For binary 
classifiers, $p(t, y)$ is given in the following forms:
\begin{equation}\tag{45a}
 \label{equ:45a}
\begin{array}{r@{}l}
& p(t,y)  =  \left[ \begin{array}{r@{~}l}
p(t_1) \qquad \qquad 0 \qquad \qquad & 0\\
\dfrac{p(t_2)(x_2 - x_3)}{(x_4 - x_3)} ~ \dfrac{p(t_2)(x_4 - x_2)}{(x_4 - x_3)} & 0
\end{array}  \right], \\
&\qquad \qquad \qquad \qquad if ~ {f(x \in R_i)  = y_1} 
\end{array}
\end{equation}

\begin{equation}\tag{45b}
 \label{equ:45b}
\begin{array}{r@{}l}
& p(t,y)= \left[ \begin{array}{r@{~}l}
\dfrac{p(t_1)(x_3 - x_1)}{(x_2 - x_1)} ~ \dfrac{p(t_1)(x_2 - x_3)}{(x_2 - x_1)} & 0\\
0 \qquad \qquad p(t_2) \qquad \qquad & 0
\end{array}\right], \\
&\qquad \qquad \qquad \qquad if ~ {f(x \in R_i)  = y_2} 
\end{array}
\end{equation}

\begin{equation}\tag{45c}
 \label{equ:45c}
\begin{array}{r@{}l}
& p(t,y) = \left[ \begin{array}{r@{~}l}
\dfrac{p(t_1)(x_3 - x_1)}{(x_2 - x_1)} ~ \qquad 0 \qquad \qquad & \dfrac{p(t_1)(x_2 - x_3)}{(x_2 - x_1)}\\
\qquad \qquad 0 \qquad ~ \dfrac{p(t_2)(x_4 - x_2)}{(x_4 - x_3)} & \dfrac{p(t_2)(x_2 - x_3)}{(x_4 - x_3)}
\end{array}\right], \\
&\qquad \qquad \qquad \qquad if ~ {f(x \in R_i)  = y_3} 
\end{array}
\end{equation}
Eq. (45) demonstrates three sets of $p(t, y)$ due to diffident decisions may be involved 
with $R_i$ in Fig. 2a. Substituting (45) into (32), one will obtain three sets of $NI$'s. 
The closed-form solutions about the decision can be given, but this 
work adopts a numerical approach for omitting tedious descriptions of the formulas. 
\newline
$Case\_2\_MU:$ {\it Fully overlapping by one class.}\newline
The formula for  $p(t, y)$ in this case (Fig. 2b) is:
\begin{equation}\tag{46a} 
 \label{equ:46a}
p(t,y) = \left[ \begin{array}{r@{~}l}
p(t_1) \quad 0 \quad & 0\\
p(t_2) \quad 0 \quad & 0
\end{array}\right], 
~ if ~ f(x \in R_i) = y_1
\end{equation}
\begin{equation}\tag{46b}
 \label{equ:46b}
\begin{array}{r@{}l}
& p(t,y)= \\
& \left[ \begin{array}{r@{~}l}
 \dfrac{p(t_1)(x_2 - x_1 - x_4 + x_3)}{(x_2 - x_1)} ~ \dfrac{p(t_1)(x_4 - x_3)}{(x_2 - x_1)} & 0\\
 0 \qquad \qquad p(t_2) \qquad \qquad & 0
\end{array}\right], \\
& \qquad  \qquad  \qquad ~ if ~ f(x \in R_i) = y_2 
\end{array}
\end{equation}
\begin{equation}\tag{46c}
 \label{equ:46c}
\begin{array}{r@{}l}
& p(t,y) =  \\
& \left[ \begin{array}{r@{~}l}
\dfrac{p(t_1)(x_2 - x_1 - x_4 + x_3)}{(x_2 - x_1)} ~~ 0 ~  & \dfrac{p(t_1)(x_4 - x_3)}{(x_2 - x_1)}\\
 \qquad \qquad 0 \qquad \qquad ~ \quad ~ 0~  &  \qquad  p(t_2)
\end{array}\right], \\
& \qquad  \qquad  \qquad ~ if ~ f(x \in R_i) = y_3
\end{array}
\end{equation}
One can get the following results through substituting (46) into (32):
\begin{equation}\tag{47a}
\label{equ:47a}
NI(t, y) = 0, ~if ~f(x \in R_i)  = y_1
\end{equation}
\begin{equation}\tag{47b}
\label{equ:47b}
0 < NI(t, y) < 1, ~if ~f(x \in R_i)  = y_2 (or~ y_3).
\end{equation}

Eq. (47a) suggests that the decision for $f(x \in R_i)  = y_1$ will produce zero information.
Therefore, mutual information classifiers will never make this kind of decisions (but Bayesian
classifiers may do so).
\newline
$Case\_3\_MU:$ {\it Separation between two distributions.}\newline
Mutual-information classifiers will show the perfect solutions as those for Bayesian classifiers. 

\section{Comparisons between Bayesian Classifiers and Mutual-information Classifiers}
\label{sec:4}
\subsection{General Comparisons}
\label{subsec:4.1}

\begin{table*}
 \label{tab:3}
\renewcommand{\arraystretch}{1.3}
\caption{Data Information for Bayesian and Mutual-information Classifiers in Binary Classifications}
\setlength{\tabcolsep}{5pt}
\centering
\begin{tabular}{|c|c|c|c|c|}
 \hline
Classifier & \multicolumn{2}{p{3.5cm}|}{\centering Required Input} & Learning & Output\\
\cline{2-3}
\multicolumn{1}{|c|}{Type} & \multicolumn{1}{c|}{On Data} & \multicolumn{1}{c|}{On Rejection} 
& \multicolumn{1}{c|}{Target} & \multicolumn{1}{c|}{Data}\\
\hline
 \multirow{6}{1in}{\centering Bayesian} & $p(t_1)$, $p(t_2)$ & & & $E_1$, $E_2$, $Rej_1$, $Rej_2$,\\ 
								  & $p(\textbf{x}|t_1)$, $p(\textbf{x}|t_2)$ & No & $\min Risk(y)$ & $Risk$,\\
								  & $\lambda_{11}$, $\lambda_{12}$, $\lambda_{13}$ & or & or & $R_1$, $R_2$, $R_3$,\\
								  & $\lambda_{21}$, $\lambda_{22}$, $\lambda_{23}$ & Yes & $\min E(y)$ & $T_{r1}$, $T_{r2}$,\\
								  & (or $T_{r1}$, and $T_{r2}$) &  &  & $(\{\lambda_{21}/\lambda_{12}\}$, or\\
								  & &  & & $\{\lambda_{21}, \lambda_{13}, \lambda_{23}\})$\\
\hline
\multirow{6}{1in}{\centering Mutual-\\Information} &  & &  & $E_1$, $E_2$, $Rej_1$, $Rej_2$,\\ 
										&  & No & & $NI$,\\ 
				  & $p(t_1)$, $p(t_2)$ & or & $\max NI(T, Y)$ & $R_1$, $R_2$, $R_3$,\\ 
				  & $p(\textbf{x}|t_1)$, $p(\textbf{x}|t_2)$ & Yes & & $T_{r1}$, $T_{r2}$, \\ 
				  &  & &  & $(\{\lambda_{21}/\lambda_{12}\}$, or\\ 
				  &  & &  & $\{\lambda_{21}, \lambda_{13}, \lambda_{23}\})$\\ 
\hline
\end{tabular}
\end{table*}

Mutual-information classifiers provide users a wider perspective in processing classification 
problems, hence a larger toolbox in their applications. For discovering new features in this 
approach, the present section will discuss general aspects of mutual-information and Bayesian 
classifiers at the same time for a systematic comparison. The main objective of the comparative 
study is to reveal their corresponding 
advantages and disadvantages. Meanwhile, their associated issues, or new challenges, are also 
presented from the personal viewpoint of the author.

First, both types of classifiers share the same assumptions of requiring the exact knowledge 
about class distributions and specifying the status of the reject option (Table III). 
The ``{\it exact knowledge}'' feature imposes the most weakness on the two approaches 
in applications. In other words, the approaches are more theoretically meaningful, rather 
than directly useful for solving real-world problems. When the exact knowledge is not available, 
the existing estimation approaches to class distributions 
\cite{duda2001}\cite{fukunaga1990}\cite{berger1985} for Bayesian classifiers will be feasible 
for implementing mutual-information classifiers. The learning targets of Bayesian classifiers 
involve evaluations of risks or errors, which is mostly compatible with classification 
goals in real-life applications. However, the concept of mutual information, or 
entropy-based criteria, is not a common concern or requirement from most classifier 
designers and users \cite{hu2008}.

Second, Bayesian classifiers will ask (or implicitly apply) cost terms for their designs. 
This requirement provides both advantages and disadvantages depending on applications.
The main advantage is its flexibility in offering objective or subjective 
designs of classifiers. When the exact knowledge is available and reliable, inputing such 
data will be very simple and meaningful for realizing objective designs. At the same time, 
subjective designs will always be possible. The main disadvantage may occur for objective 
designs if one has incomplete information about cost terms. Generally, cost terms are more 
liable to subjectivity than prior probabilities. In this case, avoiding the introduction of 
subjectivity is not an easy task for Bayesian classifiers. Mutual-information classifiers, 
without requiring cost terms, will fall into an objective approach. They carry an intrinsic 
feature of ``{\it letting the data speak for itself}'', which exhibits a significant 
difference from a subjective version of Bayesian classifiers. However, the current 
definition of mutual-information classifiers needs to be extended for carrying 
the flexibility of subjective designs, which is technically feasible by introducing 
free parameters, such as fuzzy entropy \cite{liu2007}. 

Third, one of the problems for the current learning targets of Bayesian classifiers is 
their failure to obtain the optimal rejection threshold in classifications. Although 
Chow \cite{chow1970} and Ha \cite{ha1997} suggested formulas respectively in forms of:
\setcounter{equation}{47}
\begin{equation}
 \label{equ:49a}
\min Risk(T_r) = E(T_r) + T_r Rej(T_r),
\end{equation}
or
\begin{equation}
 \label{equ:49b}
\min \dfrac{E(T_r)}{Rej(T_r)},
\end{equation}
respectively, a minimization from both formulas will lead to a solution of $T_r=0$ 
for $Risk=0$, which implies a rejection of all patterns. Therefore, we can expect to 
establish a meaningful learning target which is applicable to Bayesian classifiers 
for determining optimal rejection thresholds. On the contrary, mutual-information 
classifiers are able to achieve the optimal rejection thresholds as the classifiers' outcomes. 
The remaining issue is to study the optimal cases in a systematic way.

Fourth, Bayesian classifiers generally fail to handle class imbalanced data properly if no 
cost terms are specified in classifications, as described in Theorem 3. When one 
class approximates a smaller (or zero) population and no distinction is made among 
error types, Bayesian classifiers have a tendency to put all patterns of the smaller 
class into error, and its NI will be approximately zero, which represents that no information 
is obtained from classifiers \cite{mackay2003}. Mutual-information classifiers display 
particular advantages in these situations, including cases for abstaining classifications. 
They provide a solution of balancing error types and reject types without using cost terms. 
The challenge lies in their theoretical derivation of response behaviors, such as, upper bound 
and lower bound of $E_i/p(t_i)$ for mutual-information classifiers. 

Fifth, mutual-information classifiers will add extra computational complexities and costs 
over Bayesian classifiers. Both types of classifiers require computations of posterior 
probability. When these data are obtained, Bayesian classifiers will produce decision 
results directly. However, mutual-information classifiers will need further procedures,
such as, to form a confusion matrix (or a joint distribution matrix), to evaluate $NI$ 
in (\ref{equ:31}), and to search boundary points from a non-convex space $NI$ in 
(\ref{equ:40}). These procedures will introduce significantly analytical and computational 
difficulties to mutual-information classifiers, particularly in multiple-class problems 
with high dimensions. 

Note that the discussions above provide a preliminary answer to the question posed in 
the title of this paper. In another connection, Appendix B presents the tighter 
bounds between conditional entropy and Bayesian error in binary 
classifications. Further investigations are expected to search other differences 
under various assumptions or backgrounds, such as distributions of mixture models, 
multiple-class classifications in high dimension variables, rejection to a subset of 
classes \cite{ha1997}, and experimental studies from real-world datasets. 

\subsection{Comparisons on Univariate Gaussian Distributions}
\label{subsec:4.2}
Gaussian distributions are important not only in theoretical sense. To a large extent, 
this assumption is also appropriate for providing critical guidelines in real applications. 
For classification problems, many important findings can be revealed from a study on 
Gaussian distributions. 

The following numerical examples are specifically designed for 
demonstrating the intrinsic differences between Bayesian and mutual-information classifiers 
on Gaussian distributions. For calculations of $NI$'s values on the following example, 
an open-source toolkit \cite{hu2009b} is adopted for computations of mutual-information classifiers. 

\begin{example}
{\it Two cross-over points}. The data for no rejection are given below:  
\begin{equation}
 \begin{array}{r@{}l}
& No~rejection: \\
& \mu_1=-1, ~\sigma_1=2,~p(t_1)=0.5,~\lambda_{11}=0,~\lambda_{12}=1,\\ \nonumber
& \mu_2=1, ~~ ~\sigma_2=1,~p(t_2)=0.5,~\lambda_{21}=1,~\lambda_{22}=0
\end{array}
\end{equation}
The cost terms are used for Bayesian classifiers, but not for mutual-information 
classifiers. Table IV lists the results for both classifiers. One can obtain the same 
results when inputing $\lambda_{13}=1-\lambda_{23}$ for Bayesian classifiers. This is 
why a two-by-two matrix has to be used in the case of no rejection. Two cross-over points are 
formed in this examples (Fig. 1b). If no rejection is selected, both classifiers will 
have two boundary points. Bayesian classifiers will partition the classification regions 
by having $x_{b1}=x_{c1}=-0.238$ and $x_{b2}=x_{c2}=3.57$. Mutual-information classifiers 
widen the region $R_2$ by $x_{b1}=-0.674$ and $x_{b2}=4.007$ so that the error for 
Class 2 is much reduced. If considering zero costs for correct 
classifications and using eq. (\ref{equ:13}) with  $\delta_i=\lambda_{21}/\lambda_{12}$, 
one can calculate a cost ratio below for an independent parameter to Bayesian classifiers 
in the case of no rejection:
\begin{equation}
 \label{equ:49}
\Lambda_{21} = \dfrac{\lambda_{21}}{\lambda_{12}} = \dfrac{p(x=x_b|t_1)p(t_1)}{p(x=x_b|t_2)p(t_2)},
\end{equation}
which is used to establish an equivalence between mutual-information classifiers and 
Bayesian classifiers. Substituting the boundary points of mutual-information classifier 
at $x_{b1}=-0.674$ and $x_{b2}=4.007$ into $p(x|t_i)$ and (50), respectively, 
one receives a unique cost ratio value, $\Lambda_{21}=2.002$. Hence, this mutual-information 
classifier has its unique equivalence to a specific Bayesian classifier which is exerted by 
the following conditions to the cost terms:
\begin{equation}
 \lambda_{11} = 0,~\lambda_{12}=1.0,~\lambda_{21}=2.002,~\lambda_{22}=0. \nonumber
\end{equation}
\end{example}

\begin{table*}
 \label{tab:4}
\renewcommand{\arraystretch}{1.3}
\caption{Results of Example 1 on Univariate Gaussian Distributions}
\setlength{\tabcolsep}{7pt}
\centering
\begin{tabular}{|c|c|c|c|c|c|c|c|c|}
 \hline
Reject & Classifier & $E_1$ & & $Rej_1$ & & $T_{r1}$ & $x_{b1}$, $x_{b2}$ & \\
Option & Type & $E_2$ & $E$ & $Rej_2$ & $Rej$ & $T_{r2}$ & $x_{b3}$, $x_{b4}$ & $NI$ \\
\hline
 & & 0.170 & & 0 & & - & -0.238, 3.571 & \\
 No & Bayesian & 0.057 & \textbf{0.227} & 0 & 0 & - & -, - & 0.245\\
\cline{2-9}
Rejection & Mutual- & 0.215 & & 0 & & - & -0.674, 4.007 & \\
  & Information & 0.024 & 0.239 & 0 & 0 & - & -, - & \textbf{0.260}\\
\hline
 & & 0.131 & & 0.083 & & 0.333 & -0.673, 0.162 & \\
 & Bayesian & 0.024 & \textbf{0.155} & 0.084 & 0.167 & 0.375 & 3.171, 4.006 & 0.285\\
\cline{2-9}
Rejection & Mutual- & 0.154 & & 0.118 & & 0.141 & -1.24, -0.0762 & \\
  & Information & 0.006 & 0.160 & 0.068 & 0.186 & 0.445 & 3.409, 4.571 & \textbf{0.297}\\
\hline
\end{tabular}
\end{table*}

Following the similar analysis above, one can reach a consistent observation for conducting 
a parametric study on $\sigma_1/\sigma_2$ in binary classifications. When two classes are 
well balanced, that is, $p(t_1)=p(t_2)$, both types of classifiers will produce larger errors in association with the 
larger-variance class. However, mutual-information classifiers always add more cost weight 
on the misclassification from a smaller-variance class. In other words, mutual-information 
classifiers prefer to generate a smaller error on a smaller-variance class in comparison 
with Bayesian classifiers when using zero-one cost functions (Table IV). This performance 
behavior seems closer to our intuitions in binary classifications under the condition 
of a balanced class dataset. When two classes are significantly different from their associated 
variances, a smaller-variance class generally represents an interested signal embedded within 
noise which often has a larger variance. The common practices in such classification scenarios 
require a larger cost weight on the misclassification from a smaller-variance class, and vice 
verse from a larger-variance class.

If a reject option is enforced for the following data:
\begin{equation}
 \begin{array}{r@{}l}
& Rejection: \\
& \mu_1=-1,\sigma_1=2,p(t_1)=0.5,\lambda_{11}=0,\lambda_{12}=1.2,\\ \nonumber
&\lambda_{13}=0.2,\\
& \mu_2=1,\sigma_2=1,p(t_2)=0.5,\lambda_{21}=1,\lambda_{22}=0,\\
&\lambda_{23}=0.6
\end{array}
\end{equation}
four boundary points are required to determine classification regions as shown in Fig 1d. 
For the given cost terms, a Bayesian classifier shows a lower error rate and a lower reject 
rate. While the rejects are almost equal between two classes, the errors are significantly 
different. One is able to adjust the errors and rejects by changing cost terms. For 
mutual-information classifiers, however, a balance is automatically made among error types 
and reject types. The results, shown in Table IV, are considered for carrying the feature of 
objectivity in evaluations since no cost terms are specified subjectively. Note that a reject 
option enables both classifiers to reach higher values on their $NI$'s than those in the case 
of without rejection. Because no ``{\it one-to-one}'' relations exist among the thresholds 
and the cost terms in a rejection case, one will fail to acquire a unique set of the equivalent cost 
terms between the Bayesian classifier and the mutual information classifier. For example, 
two sets of cost terms below will produce the same Bayesian classifiers based on the
given solutions of the mutual information classifier:
\begin{equation}
\left\{ \begin{array}{r@{}l}
& \lambda_{11}=0,~\lambda_{12}=1,~\lambda_{13}=0.0376,\\ \nonumber
& \lambda_{21}=1,~\lambda_{22}=0,~\lambda_{23}=0.772
\end{array}\right.
\end{equation}
or
\begin{equation}
\left\{ \begin{array}{r@{}l}
& \lambda_{11}=0,~\lambda_{12}=2.247,~\lambda_{13}=1,\\ \nonumber
& \lambda_{21}=7.069,~\lambda_{22}=0,~\lambda_{23}=1.
\end{array}\right.
\end{equation}
The meanings for two sets of cost terms are different. The first set indicates 
the same costs for errors, but the second one suggests the same costs for
rejects. The results above imply an intrinsic 
problem of ``{\it non-consistency}'' for interpreting cost terms. One needs to be cautious 
about this problem when setting cost terms to Bayesian classifiers. This phenomenon occurs 
only in the case that a reject option is considered, but does not in the case without rejection. 
If the knowledge about thresholds exists, abstaining classifiers are better to apply $T_{rk}$ 
directly for the input data (Table III), instead of employing cost terms. If no information is given
about the thresholds or cost terms, mutual-information classifiers are able to provide 
an objective, or initial, reference of $T_{rk}$ for Bayesian classifiers in cost sensitive learning. 

\begin{example}
{\it One cross-over point}. The given inputs in this example are:
\begin{equation}
 \begin{array}{r@{}l}
& No~rejection: \\
& \mu_1=-1,~\sigma_1=1,~\lambda_{11}=0,~\lambda_{12}=1,\\ \nonumber
& \mu_2=1,~\sigma_2=1,~\lambda_{21}=1,~\lambda_{22}=0,\\
& p(t_1)=0.5,2/3,0.8,0.9,0.99,0.999,0.9999\\
& p(t_2)=0.5,1/3,0.2,0.1,0.01,0.001,0.0001
\end{array}
\end{equation}
\end{example}

Specific attention is paid to the class imbalanced data. When Class 2 alters 
from ``{\it balanced}'', ``{\it minority}'' to ``{\it rare}'' status in the whole data, 
we need to find out what behaviors both types of classifiers will display. For this purpose, 
a natural scheme with zero-one cost terms is set for Bayesian classifiers. Numerical 
investigations are conducted in this example. Table V lists the results of classifiers 
on the given data. If following the conventional term $FNR$ for ``{\it false negative rate}'' 
in binary classifications, which is defined as:
\begin{equation}
 \label{equ:51}
FNR = \dfrac{E_2}{p(t_2)}
\end{equation}
one can examine behaviors of $FNR$ with respect to the ratio $p(t_1)/p(t_2)$. Sometimes, 
$FNR$ is also called a ``{\it miss rate}'' \cite{duda2001}. Two types of classifiers show 
the same results when two classes are exactly balanced, that is, 
$p(t_1)/p(t_2)=1$. A single boundary point (Fig. 1a) separates two classes at the exact 
cross-over point ($x_b=x_c=0$). When one class, say $p(t_2)$ for Class 2, becomes smaller, 
the boundary point of Bayesian classifier moves toward to the mean point $(\mu_2=1)$ of 
Class 2 (as pointed out in [\cite{duda2001}, page 39]), and passes it finally. For keeping 
the smallest error, a Bayesian classifier will sacrifice the minority class. The results in
Table V confirm Theorem 3 numerically on the Bayesian classifiers.
Fig. 3 shows such behavior from the plot of ``$E_2/p(t_2)$ vs. $p(t_1)/p(t_2)$''. Note that 
the plots for the range from $10^{-4}$ to $10^0$ on the $p(t_1)/p(t_2)$ axis are also depicted 
based on the data in Table V. For example, at the data point of $p(t_1)/p(t_2)=1/2$, one can 
get $E_2/p(t_2)=0.0594/(2/3)$, where 0.0594 is taken from $E_1$ for the data at 
$p(t_1)/p(t_2)=2$. The response of $E_2/p(t_2)$, representing the false negative rate, 
shows a distinguished property of Bayesian classifiers. One can observe that the 
complete set of Class 2 could be misclassified when it becomes extremely rare. This 
finding explains another reason for the question: ``{\it Why do classifiers perform 
worse on the minority class?}'' in \cite{weiss2001}. 

\begin{table*}
 \label{tab:5}
\renewcommand{\arraystretch}{1.3}
\caption{Results of Example 2 on Univariate Gaussian Distributions}
\setlength{\tabcolsep}{5pt}
\centering
\begin{tabular}{|c|c|c|c|c|c|c|c|c|}
 \hline
 Classifier & $p(t_1)/p(t_2)$ & 1 & 2 & 4 & 9 & 99 & 999 & 9999 \\
 Type & $[p(t_1), p(t_2)]$ & $[0.5, 0.5]$ & $[2/3, 1/3]$ & $[0.8, 0.2]$ 
& $[0.9, 0.1]$ & $[0.99, 0.01]$ & $[0.999, 0.001]$ & $[0.9999, 0.0001]$ \\
\hline
 & $E_1$ & 0.0793 & 0.0594 & 0.0362 & 0.0161 & 0.483e-3 & 0.422e-5 & 0.000 \\
 & $E_2$ & 0.0793 & 0.0856 & 0.0759 & 0.0539 & 0.903e-2 & 0.993e-3 & 0.1e-3\\
\cline{2-9}
Bayesian & $E_2/p(t_2)$ & 0.159 & 0.257 & 0.379 & 0.539 & 0.903 & 0.993 & 1.000 \\
\cline{2-9}
  & $x_b(=x_c)$ & 0.0 & 0.347 & 0.693 & 1.10 & 2.30 & 3.45 & 4.61 \\
\cline{2-9}
  & $H(T|Y)$ & 0.631 & 0.591 & 0.491 & 0.349 & 0.0756 & 0.0113 & 0.00147 \\
\cline{2-9}
  & $NI$ & 0.369 & 0.356 & 0.320 & 0.256 & 0.0644 & 0.00524 & 0.124e-3 \\
\hline
 & $E_1$ & 0.0793 & 0.0867 & 0.0852 & 0.0772 & 0.0585 & 0.0551 & 0.0547 \\
 & $E_2$ & 0.0793 & 0.0637 & 0.0451 & 0.0264 & 0.331e-2 & 0.343e-3 & 0.345e-4\\
\cline{2-9}
Mutual- & $E_2/p(t_2)$ & 0.159 & 0.191 & 0.225 & 0.264 & 0.331 & 0.343 & 0.345 \\
\cline{2-9}
 Information & $x_b$ & 0.0 & 0.126 & 0.246 & 0.367 & 0.562 & 0.597 & 0.601 \\
\cline{2-9}
& $H(T|Y)$ & 0.631 & 0.586 & 0.472 & 0.320 & 0.0629 & 0.00957 & 0.00129 \\
\cline{2-9}
  & $NI$ & 0.369 & 0.362 & 0.346 & 0.317 & 0.222 & 0.161 & 0.125 \\
\hline
\end{tabular}
\end{table*}

\begin{figure}[!t]
\centering
\includegraphics[width=3.in,height=2.in,bb= -15 0 260 180]{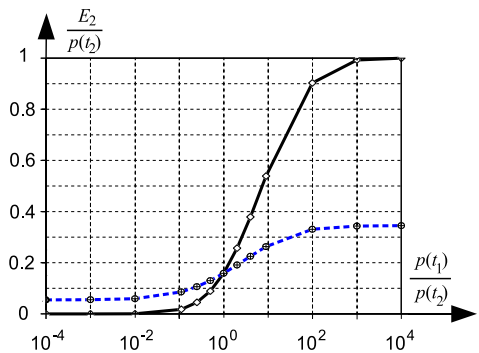}
\caption{Curves of ``$E_2/p(t_2)$ vs. $p(t_1)/p(t_2)$''. Solid curve: Bayesian classifier. 
Dashed curve: Mutual-information classifier.}
\label{fig:3}
\end{figure}

Mutual-information classifiers exhibit different behavior in the given dataset. 
The first important feature is that the boundary point will shift toward the mean 
point $(\mu_2=1)$ of Class 2 but will never go over it. The second feature informs 
that the response of $E_2/p(t_2)$ approaches asymptotically to a stable value, about 
0.345 in this example, for a large ratio of $p(t_1)/p(t_2)$. This feature indicates 
that mutual-information classifiers will never sacrifice a minority class completely 
in this specific example. A significant fraction of the rare class is identified correctly. 
Moreover, the curve of $E_2/p(t_2)$ also demonstrates a lower, yet non-zero, bound on error 
rate (about 0.054) when $p(t_1)/p(t_2)$ approaches to zero. This phenomenon implies that, 
for Gaussian distributions of classes, mutual-information classifiers generally do not hold 
a tendency of sacrificing a complete class in classifications. However, from a theoretical 
viewpoint, we still need to establish an analytical derivation of lower and upper bounds of 
$E_i/p(t_i)$ for mutual-information classifiers.

\begin{example}{\it Zero cross-over points.}
 The given data for two classes are:
\begin{equation}
  \begin{array}{r@{}l}
& \mu_1=0,~\sigma_1=2,~p(t_1)=0.8,\\ \nonumber
& \mu_2=0,~\sigma_2=1,~p(t_2)=0.2.
\end{array}
\end{equation}
\end{example}

Although no data are specified to the cost terms, it generally implies a zero-one lost 
function for them \cite{duda2001}. From eq. (\ref{equ:13}), one can see a case of zero 
cross-over point occurs in this example (Fig. 4c). For the zero-one setting to cost terms, the
Bayesian classifier will produce a specific classification result of ``{\it Majority-taking-all}'', 
that is, for all patterns identified as Class 1. The error gives to Class 2 only, and it holds 
the relation of $NI=0$, which indicates that no information is obtained from the classifier \cite{mackay2003}. 
One can imagine that the given example may describe a classification problem where a target class, with 
Gaussian distribution, is also corrupted with wider-band Gaussian noise in a frequency domain 
(Fig. 4a). The plots of $p(t_i)p(x|t_i)$ shows the overwhelming distribution of Class 1 over 
that of Class 2 (Fig. 4b). The plots on the posterior probability $p(t_i|x)$ indicate that 
Class 2 has no chance to be considered in the complete domain of $x$ (Fig. 4c).  

\begin{figure*}[!t]
\centering
\includegraphics[width=5.0in,height=1.7in,bb= 60 0 470 155]{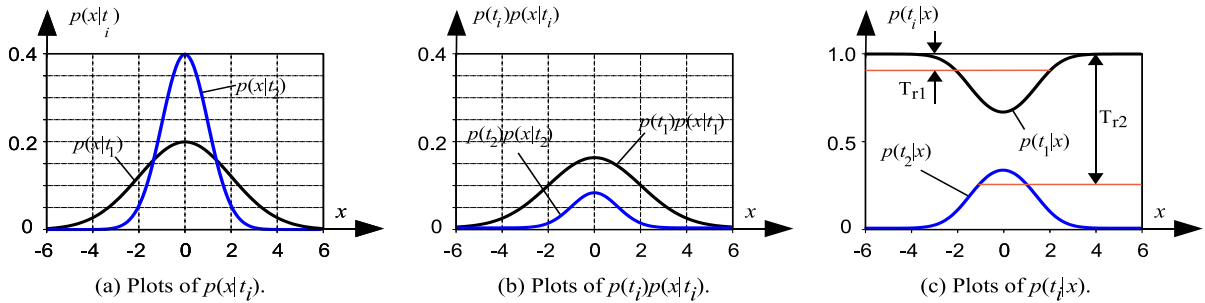}
\caption{Plots for Example 3 where (b)-(c) describe a signal (blue curve) embedded 
by wider-band noise (black curve).}
\label{fig:4}
\end{figure*}

Table VI lists the results for both types of classifiers. The Bayesian approach fails to achieve 
the meaningful results on the given data. When missing input data of $\lambda_{13}$ and $\lambda_{23}$, 
one cannot carry out the Bayesian approach for abstaining classifications. On the contrary, without specifying 
any cost term, mutual-information classifiers are able to detect the target class with a 
reasonable degree of accuracy. When no rejection is selected, less than two percentage error 
$(E2=1.53\%)$ happens to the target class. Although the total error $(E=51.4\%)$ is much higher 
than its Bayesian counterpart $(E=20\%, FNR=0\%)$, the result of about eight percentage point $(FNR=7.65\%)$ 
of the miss rate to the target is really meaningful in applications. If a reject option is engaged, 
the miss rate is further reduced to $FNR= 4.10\%$, but includes adding a reject rate of $29.1\%$ over 
total possible patterns. This example confirms again the unique feature of mutual-information 
classifiers. The results of $T_{rk}$ from mutual-information classifiers can also serve a useful 
reference for the design of Chow's abstaining classifiers, either with or without knowledge about 
cost terms. 

\begin{table*}
 \label{tab:6}
\renewcommand{\arraystretch}{1.3}
\caption{Results of Example 3 on Univariate Gaussian Distributions}
\setlength{\tabcolsep}{10pt}
\centering
\begin{tabular}{|c|c|c|c|c|c|c|c|c|}
 \hline
Reject & Classifier & $E_1$ & & $Rej_1$ & & $T_{r1}$ & $x_{b1}$, $x_{b2}$ & \\
Option & Type & $E_2$ & $E$ & $Rej_2$ & $Rej$ & $T_{r2}$ & $x_{b3}$, $x_{b4}$ & $NI$ \\
\hline
 & & 0.0 &  & 0 & & - & -, - & \\
No & Bayesian & 0.2 & 0.2 & 0 & 0 & - & -, - & 0.0\\
\cline{2-9}
Rejection & Mutual & 0.499 & & 0 & & - & -1.77, 1.77 & \\
  & Information & 0.0153 & 0.514 & 0 & 0 & - & -, - & 0.0803 \\
\hline
 & Mutual & 0.316 &  & 0.239 &  & 0.0945 & -2.04, -1.03 &  \\
Rejection & Information & 0.00819 & 0.324 & 0.0520 & 0.291 & 0.749 & 1.03, 2.04 & \textbf{0.0926}\\
\hline
\end{tabular}
\end{table*}

\subsection{Comparisons on Univariate Uniform Distributions}
\label{subsec:4.3}
Uniform distributions are very rare in classification problems. This 
section shows one example given from \cite{chow1970}. A specific effort is 
made on numerical comparisons between the two types of classifiers.

\begin{example} {\it Partially overlapping between two distributions.}
 The task for this example is to set the cost terms for controlling the 
decision results on the overlapping region for the given data from \cite{chow1970}:
\begin{equation}
 p(x|t_1) = \left\{ \begin{array}{r@{\quad}l}
1 & when ~ 0 \leq x \leq 1\\ \nonumber
0 & otherwise
\end{array}\right.
\end{equation}
\begin{equation}
 p(x|t_2) = \left\{ \begin{array}{r@{\quad}l}
1/2 & when ~ 0.5 \leq x \leq 2.5\\ \nonumber
0 & otherwise
\end{array}\right. 
\end{equation}
\begin{equation}
p(t_1)=p(t_2)=0.5. \nonumber
\end{equation}
In uniform distributions, a single independent parameter will be sufficient for classifications.  
Table VII lists the different results with respect to $T_r$.  
Note that the present results have extended Chow's 
abstaining classifiers by adding one more decision case of $f(x \in R_i)  = y_2$ than those in \cite{chow1970}. 
The extension is attributed to the three rules used in eq. (\ref{equ:9}), rather than two 
in Chow's classifiers, which demonstrates a more general solution for classifications. 
One can see that mutual-information classifiers will decide $f(x \in R_i)  = y_3$ from the 
given data of class distributions sine they receive the maximum value of $NI$. If no 
rejection is enforced, mutual-information classifiers will choose $f(x \in R_i)  = y_1$ for their 
solution.
\end{example}

\begin{table*}
 \label{tab:7}
\renewcommand{\arraystretch}{1.3}
\caption{Results of Example 4 on Univariate Uniform Distributions}
\setlength{\tabcolsep}{10pt}
\centering
\begin{tabular}{|c|c|c|c|c|c|c|}
 \hline
$T_{r}$ & Decision on $R_i$& $E_1$, $E_2$ & $E$ & $Rej_1$, $Rej_2$ & $Rej$ &  $NI$\\
\hline
$1/3<T_r<2/3$ & $f(x \in R_i)  = y_1$ & 0.0, 0.125 & 0.125 & 0, 0 & 0 &  0.549 \\
\hline
$2/3 \leq T_r \leq 1$ & $f(x \in R_i)  = y_2$ & 0.250, 0 & 0.250 & 0, 0 & 0 &  0.311 \\
\hline
$0 \leq T_r \leq 1/3$ & $f(x \in R_i)  = y_3$ & 0, 0 & 0 & 0.250, 0.125 & 0.375 &  \textbf{0.656} \\
\hline
\end{tabular}
\end{table*}

\section{Conclusions}
\label{sec:5}
This work explored differences between Bayesian classifiers and mutual-information classifiers. 
Based on Chow's pioneering work \cite{chow1957}\cite{chow1970}, the author revisited Bayesian 
classifiers on two general scenarios for the reason of their increasing popularity 
in classifications. The first was on the zero-one cost functions for classifications  
without rejection. The second was on the cost distinctions among error types and reject 
types for abstaining classifications. In addition, the paper focused on the analytical study of mutual-information classifiers 
in comparison with Bayesian classifiers, which showed a basis for novel design or analysis of 
classifiers based on the entropy principle. The general decision rules were derived for both Bayesian 
and mutual-information classifiers based on the given assumptions. Two specific theorems were 
derived for revealing the intrinsic problems of Bayesian classifiers in applications under the 
two scenarios. One theorem described that Bayesian classifiers have a tendency of overlooking the 
misclassification error which is associated with a minority class. This tendency will degenerate 
a binary classification into a single class problem for the meaningless solutions. The other theorem 
discovered the parameter redundancy of cost terms in abstaining classifications. This weakness is 
not only on reaching an inconsistent interpretation to cost terms. The pivotal difficulty will be 
on holding the objectivity of cost terms. In real applications, information about cost terms is 
rarely available. This is particularly true for reject types. While Berger explained the demands 
for ``{\it objective Bayesian analysis}'' \cite{berger2006}, we need to recognize that this goal 
may fail from applying cost terms in classifications. In comparison, mutual-information 
classifiers do not suffer such difficulties. Their advantages without requiring cost terms 
will enable the current classifiers to process abstaining classifications, like
a new folder of ``{\it Suspected Mail}'' in Spam filtering \cite{sahami1998}. Several numerical 
examples in this work supported the unique benefits of using mutual-information classifiers in 
special cases. The comparative study in this work was not meant to replace Bayesian classifiers 
by mutual-information classifiers. Bayesian and mutual-information classifiers can form 
``{\it complementary rather than competitive} (words from Zadeh \cite{zadeh1995})'' 
solutions to classification problems. 
However, this work was intended to highlight their differences from theoretical 
studies. More detailed discussions to the differences between the two types of classifiers were 
given in Section IV. As a final conclusion, a simple answer to the question title is summarized 
below: 

``{\it Bayesian and mutual-information classifiers are different essentially from their 
applied learning targets. From application viewpoints, Bayesian classifiers are more suitable 
to the cases when cost terms are exactly known for trade-off of error types and reject types. 
Mutual-information classifiers are capable of objectively balancing error types and reject types 
automatically without employing cost terms, even in the cases of extremely class-imbalanced datasets, 
which may describe a theoretical interpretation why humans are more concerned about the accuracy 
of rare classes in classifications}''.

\appendices
\section{Proof of Theorem 1}
\label{app:proof1}
\begin{proof}
The decision rule of Bayesian classifiers for the ``{\it no rejection}'' case is well known in \cite{duda2001}. 
Then, only the rule for the ``{\it rejection}'' case is studied in the present proof. Considering eq. (\ref{equ:6a}) 
first from (\ref{equ:5a}), a pattern $\textbf{x}$ is decided by a Bayesian classifier to be 
$y_1$ if $risk (y_1|\textbf{x}) < risk (y_2|\textbf{x})$ and $risk (y_1|\textbf{x}) < risk (y_3|\textbf{x})$. Substituting 
eqs. (\ref{equ:1}) and (\ref{equ:2}) into these inequality equations will result to:
\begin{equation}\tag{A1}
\label{equ:A1}
 \begin{array}{r@{~}l}
Decide~y_1~if & \dfrac{p(\textbf{x}|t_1)p(t_1)}{p(\textbf{x}|t_2)p(t_2)} > \dfrac{\lambda_{21}-\lambda_{22}}{\lambda_{12}-\lambda_{11}}\\
and ~ & \dfrac{p(\textbf{x}|t_1)p(t_1)}{p(\textbf{x}|t_2)p(t_2)} > \dfrac{\lambda_{21}-\lambda_{23}}{\lambda_{13}-\lambda_{11}}.
\end{array}
\end{equation}
Similarly, one can obtain 
\begin{equation}\tag{A2}
\label{equ:A2}
 \begin{array}{r@{~}l}
Decide~y_2~if & \dfrac{p(\textbf{x}|t_1)p(t_1)}{p(\textbf{x}|t_2)p(t_2)} \leq \dfrac{\lambda_{21}-\lambda_{22}}{\lambda_{12}-\lambda_{11}}\\
and ~ & \dfrac{p(\textbf{x}|t_1)p(t_1)}{p(\textbf{x}|t_2)p(t_2)} \leq \dfrac{\lambda_{23}-\lambda_{22}}{\lambda_{12}-\lambda_{13}},
\end{array}
\end{equation}
and eq. (\ref{equ:6c}) respectively. Eq. (\ref{equ:A1}) describes that a single upper bound within two 
boundaries will control a pattern $\textbf{x}$ to be $y_1$. Similarly, eq. (\ref{equ:A2}) describes a lower 
bound for a pattern $\textbf{x}$ to be $y_2$. From the constraints (\ref{equ:3}), one cannot determine which 
boundaries will be upper bound or lower bound. However, one can determine them from the following two 
hints in classifications:
\begin{enumerate}
 \item[A.] Eq. (\ref{equ:6c}) describes a single lower boundary and a single upper boundary for a 
pattern $\textbf{x}$ to be $y_3$. 
 \item[B.] The upper bound in (\ref{equ:A1}) and the lower bound in (\ref{equ:A2}) should be 
coincident with one of the boundaries in (\ref{equ:6c}) respectively so that classification 
regions from $R_1$ to $R_3$ will cover a complete domain of the pattern $x$ (see Fig. 1c-d).    
\end{enumerate}

The hints above suggest the novel constraints for $\lambda_{ij}$ as shown in eq. (\ref{equ:6d}). 
Any violation of the constraints will introduce a new classification region $R_4$, which is not 
correct for the present classification background. The constraints of thresholds for rejection 
(\ref{equ:6e}) can be derived directly from (\ref{equ:6c}) and (\ref{equ:6d}).  
\end{proof}

\section{Tighter Bounds between Conditional Entropy and 
Bayesian Error in Binary Classifications}
In the study of relations between mutual information ($I$)
and Bayesian error ($E$), two important studies are reported on 
the lower bound ($LB$) by Fano \cite{Fano1961} and the upper bound ($UB$) 
by Kovalevskij \cite{Kovalevskij1965} in the forms of

\begin{equation}\tag{B1}
\label{equ:B1}
LB: ~  E \geq \dfrac{H(T)-I(T,Y)-H(E)}{log_2 (m-1)}=\dfrac{H(T|Y)-H(E)}{log_2 (m-1)},
\end{equation}
\begin{equation}\tag{B2}
\label{equ:B2}
UB: ~  E \leq \dfrac{H(T)-I(T,Y)}{2}=\dfrac{H(T|Y)}{2} ,~~~~~~~~~~~~~~~~~~~~
\end{equation}
where $m$ is the total number of classes in $T$, $H(E)$ is 
the binary Shannon entropy, and $H(T|Y)$ is 
called conditional entropy which can be derived from a general relation \cite{duda2001}: 
\begin{equation}\tag{B3}
\label{equ:B3}
I(T,Y)=I(Y,T)=H(T)-H(T|Y)=H(Y)-H(Y|T).
\end{equation}

For binary classifications $(m=2)$, a tighter Fano's bound in \cite{Vajda2007}
\cite{Fisher2009} is adopted. Based on the rationals of Bayesian error, we
suggest the tighter upper and  lower bounds in the forms of:
\begin{equation}\tag{B4}
\label{equ:B4}
Modified ~LB: ~ H(E) \geq H(T|Y),~and ~ 0 \leq E, ~~
\end{equation}
\begin{equation}\tag{B5}
\label{equ:B5}
Modified ~UB:  ~ E \leq min (p(t_1), p(t_2), \dfrac{H(T|Y)}{2}).
\end{equation}

Fig. 5 shows the bounds in binary classifications, 
which is different from ``$ I(T,Y) ~vs. ~E$'' plots in \cite{Fisher2009}.
Because of the equivalent relations \cite{hu2008}:
\begin{equation}\tag{B6}
\label{equ:B6}
max ~I(T,Y)= min ~H(T|Y),
\end{equation}
the plots for $H(T|Y)$  is preferable, which does not require the information of  
$H(T)$.
One is able to draw the lower-bound curve from (B4), but unable to
show its explicit form for $E$.  
The areal feature of the enclosed bounds suggests two important properties about the
relations. The first is due to the approximations in the derivations 
of the bounds \cite{Fano1961} \cite{Kovalevskij1965}.   
The second represents an intrinsic property of  no ``{\it one-to-one}'' 
relations between mutual information and accuracy 
in classifications \cite{wang2008}. 

Triangles and circles shown in Fig. 5 represent the paired data 
in Table V from Bayesian classifiers
and mutual information classifiers, respectively.
They clearly demonstrate the specific forms in their positions within the same pairs.
The circle position is either coincident or ``{\it up and/or left}'' to its counterpart.  
These forms are attributed to the different directions of driving force  
for two types of classifiers. One is for ``$min ~ E$'' and the other for ``$min ~ H(T|Y)$''.

Important findings are observed in related to the bounds. First, the triangles
demonstrate Fano's bound in eq. (B4) to be a very tight lower bound. 
Second, an upper bound of $E_{max}$ exists according to Theorem 3, which is
tighter than a constant one ($=0.5$) in \cite{Vajda2007}. 
When $p_{min}$ decreases as shown 
in Table V, the upper bound from the maximum Bayesian error will become
closer to its associated data. Third, the Fano's lower bound is effective
for all classifiers, including mutual information classifiers. However, 
the upper bounds, even the constant one ($=0.5$) becomes invalid for mutual
information classifiers (see the data $E=0.514$ in Table VI). 

The observations above indicate the necessity of further investigation into the upper bounds
for better descriptions of the relations. If much tighter upper bounds are possible, they 
are desirable to disclose their theoretical insights between the two types of classifiers. 

\begin{figure}
\centering
\includegraphics[width=3.4in,height=2.1in,bb= -15 0 340 220]{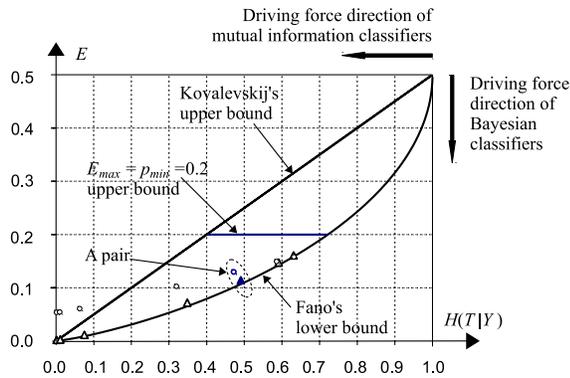}
\caption{The bounds between conditional entropy $H(T|Y)$ 
and Bayesian error in binary classifications. 
Triangles and circles are the data in Table V from Bayesian classifiers
and mutual information classifiers, respectively.
An upper bound from the maximum Bayesian error exists,
say, $E_{max}=0.2$ for the filled triangle.}
\label{fig:B1}
\end{figure}

\section*{Acknowledgments}
This work was supported in part by NSFC $\#$61075051. The assistances from Mr. Yajun Qu and Mr. Christian Ocier in
the text preparation are gratefully acknowledged.

%

\bibliographystyle{IEEETrans}


%





\end{document}